\newtheorem{theorem}{Theorem} 
\newtheorem{prop}{Proposition} 
\newtheorem*{assumption*}{Assumption}
\newtheorem*{claim*}{Claim}
 \DeclareFontFamily{OMX}{MnSymbolE}{}
 \DeclareSymbolFont{MnLargeSymbols}{OMX}{MnSymbolE}{m}{n}
 \DeclareFontShape{OMX}{MnSymbolE}{m}{n}{
	 <-6>  MnSymbolE5
	<6-7>  MnSymbolE6
	<7-8>  MnSymbolE7
	<8-9>  MnSymbolE8
	<9-10> MnSymbolE9
   <10-12> MnSymbolE10
   <12->   MnSymbolE12
 }{}
 \DeclareFontShape{OMX}{MnSymbolE}{b}{n}{
	 <-6>  MnSymbolE-Bold5
	<6-7>  MnSymbolE-Bold6
	<7-8>  MnSymbolE-Bold7
	<8-9>  MnSymbolE-Bold8
	<9-10> MnSymbolE-Bold9
   <10-12> MnSymbolE-Bold10
   <12->   MnSymbolE-Bold12
 }{}
 \let\llangle\@undefined
 \let\rrangle\@undefined
 \DeclareMathDelimiter{\llangle}{\mathopen}%
					  {MnLargeSymbols}{'164}{MnLargeSymbols}{'164}
 \DeclareMathDelimiter{\rrangle}{\mathclose}%
					  {MnLargeSymbols}{'171}{MnLargeSymbols}{'171}
\newcommand{\dotr}[1]{%
  \mathpalette\@dotr{#1}%
}
\newcommand*{\@dotr}[2]{%
  \sbox0{$\m@th#1#2$}%
  \usebox{0}%
  \raisebox{\dimexpr\ht0-\height}{$\m@th#1\@smallbullet#1\bullet$}%
  \kern\scriptspace
}
\newcommand*{\@smallbullet}[2]{%
  \scalebox{.25}{$\m@th#1#2$}%
}
\newcommand{\const}{\,{\rm const}\,}
\newcommand{\td}{\text{d}}
\newcommand{\ord}{\mathcal{O}}
\title{\textbf{Uniqueness of the extremal Schwarzschild de Sitter spacetime}}
\author{David Katona\footnote{d.katona@sms.ed.ac.uk} \orcidlink{0000-0002-5710-0665}\,\, and \,James Lucietti\footnote{j.lucietti@ed.ac.uk} \orcidlink{0000-0002-2371-9689}
\\ \\ 
\small \sl School of Mathematics and Maxwell Institute for Mathematical Sciences, 
\\ 
\small \sl University of Edinburgh, King's Buildings, Edinburgh, EH9 3FD, UK }
\date{}
\begin{document}
	\maketitle
	\begin{abstract}
	We prove that any analytic vacuum spacetime with a positive cosmological  constant in four and higher dimensions, that contains a static extremal Killing horizon with a maximally symmetric compact cross-section, must be locally isometric to either the extremal Schwarzschild de Sitter solution or its near-horizon geometry (the Nariai solution). In four-dimensions, this implies these solutions are the only analytic vacuum spacetimes that contain a static extremal horizon with compact cross-sections (up to identifications).  We also consider the analogous uniqueness problem for the four-dimensional extremal hyperbolic Schwarzschild anti-de Sitter solution and show that it reduces to a spectral problem for the laplacian on compact hyperbolic surfaces, if a cohomological obstruction to the uniqueness of infinitesimal transverse deformations of the horizon is absent. 
 \end{abstract}	


\section{Introduction} \label{sec_intro} 

The celebrated no-hair theorem establishes uniqueness of asymptotically flat, stationary,  electro-vacuum black holes, under certain assumptions, see e.g.~\cite{chrusciel_stationary_2012}. The original theorems were established for non-extremal black holes, although in recent years these have been extended  to the extremal case after an improved understanding of their near-horizon geometry~\cite{amsel_uniqueness_2010, figueras_uniqueness_2009, kleinwachter_analytical_2008, chrusciel_uniqueness_2010}.  In fact, for static extremal black holes  or supersymmetric black holes, uniqueness fails if one allows for multiple black holes, with the general solution given by the Majumdar-Papapetrou solution in both cases~\cite{chrusciel_nonexistence_2005, chrusciel_israel-wilson-perjes_2006}. In higher dimensional general relativity, black hole uniqueness no longer holds even in vacuum gravity~\cite{emparan_rotating_2002, emparan_black_2008}. Nevertheless, a number of higher-dimensional black hole classification theorems have been derived under various  symmetry assumptions~\cite{hollands_black_2012}, with the most complete results known for static black holes~\cite{gibbons_uniqueness_2002, gibbons_dilatonic_2002, kunduri_static_2017, lucietti_higher_2020} and supersymmetric black holes~\cite{breunholder_moduli_2017, katona_supersymmetric_2022}.   

If a cosmological constant is present, direct analogues of the black hole uniqueness theorems are not known, except in a few limited cases. For a positive cosmological constant,  uniqueness theorems for the non-extremal Schwarzschild de Sitter (dS) black holes have been established within the class of static spacetimes, under various assumptions on the level sets of the lapse function~\cite{borghini_uniqueness_2023}.
Remarkably, recent numerical evidence  has been presented for the existence of static binary black holes in de Sitter, which evade these assumptions~\cite{dias_static_2023}. Therefore, even the classification of static black hole spacetimes in de Sitter is  not fully understood.   On the other hand, for negative cosmological constant, the only known black hole uniqueness theorem is for the nonpositive mass hyperbolic Schwarzschild anti-de Sitter (AdS) black holes~\cite{chrusciel_towards_2001,lee_penrose_2015}.  Uniqueness  of the spherical Schwarzschild-AdS black holes remains a notable open problem (see e.g.~\cite{chrusciel_nonsingular_2005,chrusciel_nondegeneracy_2017} and references therein for some results in this direction).

In this note we consider the classification of extremal black holes with a cosmological constant.
An important feature of extremal horizons is that they admit a well-defined near-horizon geometry that itself is a 
solution of the Einstein equations. Therefore, near-horizon geometries may be classified independently of any parent black hole spacetime. Indeed, many such near-horizon classifications have been established~\cite{kunduri_classification_2013}, even for solutions with a cosmological constant where the traditional methods for proving black hole uniqueness fail. However, not all near-horizon geometries are realised as near-horizon limits of 
black hole solutions, and even if they are, the corresponding solutions might not be unique. A natural question thus
arises: can one determine all spacetimes that contain an extremal horizon with a given near-horizon geometry? The systematic study of this question was initiated in~\cite{li_transverse_2016,li_electrovacuum_2019}, where the concept of  transverse 
deformations of an extremal horizon was introduced. This involves expanding the Einstein equation in a parameter that controls deformations of the metric that are transverse to the extremal horizon (i.e. in the direction away from the horizon). At first order in this expansion the deformations are governed by the linearised Einstein equations in the background near-horizon geometry, and have been largely determined under various symmetry assumptions in vacuum and electro-vacuum gravity in four-dimensions including a cosmological constant~\cite{li_transverse_2016,li_electrovacuum_2019,kolanowski_towards_2021} (see also~\cite{dunajski_einstein_2016, fontanella_moduli_2016}).
 
The purpose of this note is to show that for certain simple near-horizon geometries one can in fact determine all the higher order deformations. Therefore, for analytic spacetimes this allows one to deduce all possible exact transverse deformations. This offers a new method to establish uniqueness theorems for extremal black holes with prescribed near-horizon geometries.   
In particular, our main result is the following.

\begin{theorem}\label{thm_dS}
	Let $(M, g)$ be an analytic spacetime that obeys the $d\ge 4$-dimensional vacuum Einstein equation with cosmological constant
	$\Lambda>0$ and contains a static degenerate Killing horizon with a maximally symmetric compact cross-section. Then $(M, g)$ is locally isometric either to 
	the extremal Schwarzschild de Sitter solution or its near-horizon geometry (the Nariai solution).
\end{theorem}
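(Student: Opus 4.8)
The plan is to use analyticity to convert the $d$-dimensional vacuum Einstein equation into an infinite hierarchy of equations on the horizon cross-section, solved order-by-order in the transverse direction, and then to show this hierarchy is rigid enough to force the geometry into one of the two claimed forms. First I would introduce Gaussian null coordinates $(v,r,x^a)$ adapted to the degenerate Killing horizon, in which
\[ g = r^2 F(r,x)\, dv^2 + 2\, dv\, dr + 2 r\, h_a(r,x)\, dv\, dx^a + \gamma_{ab}(r,x)\, dx^a\, dx^b , \]
with the horizon at $r=0$, stationary generator $\xi=\partial_v$, and the degeneracy encoded in the $r^2$ prefactor. Analyticity means $F,h_a,\gamma_{ab}$ have convergent Taylor expansions in $r$ whose coefficients $F^{(n)},h^{(n)}_a,\gamma^{(n)}_{ab}$ are tensor fields on the cross-section $\Sigma$. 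The static hypothesis is imposed as the condition $\xi\wedge d\xi=0$, which is an all-orders constraint relating the $h^{(n)}_a$ to the other coefficients and, crucially, suppressing the rotational sector of the deformation.

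Second, I would analyse the leading-order data. Inserting the expansion into the field equations at $r=0$ produces the near-horizon equations for $(\gamma^{(0)}_{ab},F_0,h^{(0)})$. Imposing that $\Sigma$ is a compact, maximally symmetric space, and using that $\Lambda>0$, should exclude the flat and hyperbolic cases and single out the round sphere $S^{d-2}$; moreover no nonzero $SO(d-1)$-invariant one-form exists on $S^{d-2}$ for $d\ge 4$, so $h^{(0)}=0$, and $F_0$ is the constant fixed by $\Lambda$ and $d$. The zeroth-order datum is therefore the rigid Nariai geometry $dS_2\times S^{d-2}$.

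Third — the heart of the argument — I would solve for the higher coefficients by recursion. At order $n$ the equations take the schematic form $\mathcal{D}\, g^{(n)} = S^{(n)}[g^{(0)},\dots,g^{(n-1)}]$, where $\mathcal{D}$ is a fixed linear (Lichnerowicz-type) operator obtained by linearising about Nariai, and the source $S^{(n)}$ is built from the lower-order coefficients. Since $\Sigma$ is a round sphere I would expand every coefficient in scalar, vector and tensor harmonics of $SO(d-1)$ and diagonalise $\mathcal{D}$ using the known Laplacian spectrum on $S^{d-2}$. Staticity eliminates the would-be vector (rotational) deformations, and I would then show that in the remaining sectors the only admissible homogeneous mode is the spherically symmetric radial one — whose amplitude is pure gauge and can be normalised away by rescaling $r$. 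Thus either the deformation is absent, in which case the series terminates at Nariai, or it is switched on and, being normalised, determines all higher coefficients uniquely; summing the convergent series then reproduces exactly the extremal Schwarzschild–de Sitter metric, and analyticity upgrades the matching of all Taylor coefficients to a local isometry.

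The hard part will be controlling the recursion at the resonant orders, where $\mathcal{D}$ acquires a kernel or cokernel matched to a Laplacian eigenvalue on $S^{d-2}$. At a cokernel one must verify the integrability (cohomological) condition that $S^{(n)}$ lies in the image of $\mathcal{D}$, while at a kernel one must show that any new homogeneous mode is either pure gauge or is forced to vanish by consistency at the next order, so that no spurious moduli survive and no symmetry-breaking deformations can be excited. Establishing that these obstructions vanish identically on the sphere — in contrast to the compact hyperbolic setting alluded to in the abstract, where the analogous obstruction need not vanish and the problem instead reduces to a Laplacian spectral question — is precisely what makes the classification complete, and is the step I expect to demand the most care.
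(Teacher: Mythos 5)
Your overall architecture (Gaussian null coordinates, Taylor expansion in the transverse coordinate, a linear problem on the cross-section at each order, and the dichotomy ``deformation off = Nariai, deformation on = extremal Schwarzschild--dS'') matches the paper's. However, the proposal defers precisely the step that constitutes the proof, and frames it in a way that obscures why the theorem is true. You anticipate ``resonant orders'' where the linearised operator acquires a kernel or cokernel matched to a Laplacian eigenvalue on $S^{d-2}$, and you expect to need the explicit $SO(d-1)$ harmonic decomposition plus integrability conditions to handle them. The paper's key observation is that for $\Lambda>0$ no such analysis is needed: after the algebraic equations fix $\beta^{(n+1)}_a$ and $\phi^{(n+2)}$ in terms of $\mu^{(n)}_{ab}$, and after the trace of $\mu^{(n)}_{ab}$ is pinned down (at first order by a supertranslation gauge choice of cross-section, which you do not address, and at higher orders by the $R_{\rho\rho}$ equation), the remaining content of the Einstein equation at order $n$ is $-\mathring\nabla^2\hat\mu^{(n)}_{ab}=-c_n\Lambda\,\hat\mu^{(n)}_{ab}$ with $c_n>0$ (e.g.\ $c_n=n^2+n+\tfrac{2}{d-3}$ for $n\ge 3$). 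Since $-\mathring\nabla^2$ is positive semi-definite on a compact manifold and the required eigenvalue is strictly negative, the traceless part vanishes identically at every order --- there are no resonances, no cokernel obstructions, and no spectral input about $S^{d-2}$ is required. The contrast with the hyperbolic $\Lambda<0$ case is exactly the sign of $c_n\Lambda$, not the vanishing of some cohomological obstruction on the sphere. Until you actually establish that the homogeneous problem is trivial at every order, the proposal is a plan rather than a proof.

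Two further points. First, your argument that $h^{(0)}=0$ because ``no nonzero $SO(d-1)$-invariant one-form exists on $S^{d-2}$'' is a non sequitur: only the cross-section metric is assumed maximally symmetric, so the near-horizon one-form $\mathring h$ need not be invariant. What forces $\mathring h=0$ (and $\mathring F=\Lambda$) is staticity of the horizon, via the classification of static near-horizon geometries. Second, you impose $\xi\wedge d\xi=0$ as an all-orders constraint to ``suppress the rotational sector''; this is unnecessary (and the paper deliberately avoids it, since its result only assumes staticity at the horizon): the equations $R^{(n-1)}_{\rho a}=0$ determine $\beta^{(n+1)}_a=-\mathring\nabla^b\mu^{(n)}_{ab}$ algebraically, so the rotational data vanishes automatically once $\mu^{(n)}_{ab}$ is shown to be pure trace.
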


This is the first uniqueness theorem for extremal vacuum black holes with a cosmological constant in four or higher dimensions.  The proof is elementary and runs as follows.  Previously, it has been shown that the first order transverse deformations for the near-horizon geometry dS$_2\times S^{d-2}$, also known as the Nariai solution,  are unique and if nonvanishing correspond to the first order deformations of the extremal Schwarzschild-dS solution~\cite{li_electrovacuum_2019}.  We show that this result persists at second order and via an inductive argument to all orders.  The key point is that at any order the Einstein equations are sourced by the lower order deformations and reduce to an eigenvalue equation for the laplacian on $S^{d-2}$ acting on a (traceless) part of the metric perturbation. For $\Lambda>0$ these eigenvalues are strictly negative and hence the Einstein equations only admit the trivial solution.

Our proof was inspired by a similar analysis of four-dimensional vacuum spacetimes with extremal toroidal horizons, which established that the only solution is a plane wave spacetime ~\cite{moncrief_symmetries_1983}. Their result can be interpreted as an explicit proof that  there are no four-dimensional extremal toroidal vacuum black holes.\footnote{Although their analysis assumes the generators of the horizon are periodic, this assumption is only used for proving existence of a Killing vector field tangent to the generators. Thus, their argument also applies to extremal Killing horizons with a toroidal cross-section, without any  assumptions on the horizon generators.}  
The idea of determining a spacetime from its near-horizon geometry alone has also been successfully applied to three-dimensional vacuum solutions with a cosmological constant~\cite{li_three_2014}, and five-dimensional supersymmetric black holes in AdS~\cite{lucietti_uniqueness_2021, lucietti_uniqueness_2022}, although in these cases there is no need to expand the Einstein equations order by order.

For $d=4$ spacetime dimensions, static near-horizon geometries with compact cross-sections have been shown to be unique~\cite{chrusciel_nonexistence_2005}.  Therefore, combining this  with Theorem \ref{thm_dS},  implies that any analytic spacetime containing a static extremal horizon with compact cross-sections must be locally isometric to either the extremal Schwarzschild de Sitter solution or the Nariai solution.   In particular, this solves the classification problem for static extremal vacuum black holes in de Sitter, assuming analyticity, although our result in fact only assumes staticity of the Killing field at the horizon and not globally.\footnote{By static we mean that the Killing field that is null on the horizon is everywhere hypersurface orthogonal, but not necessarily timelike anywhere. Indeed, for the extremal Schwarzschild-dS solution, this Killing field is strictly spacelike away from the horizon.}  

We emphasise that the above result  does not make any global assumptions on the spacetime such as asymptotics or the number of black holes. It therefore rules out the possibility of extremal multi-black holes in de Sitter at least for analytic spacetimes.  This is particularly interesting in view of the above mentioned fact that non-extremal static binary black holes have been recently constructed numerically~\cite{dias_static_2023}.  This should also be contrasted with the static extremal Majumdar-Papapetrou multi-black holes in Einstein-Maxwell theory (no cosmological constant), which are analytic in four-dimensions but not in higher dimensions~\cite{welch_smoothness_1995,candlish_smoothness_2007,lucietti_higher_2020}.  Furthermore, it has been argued that extremal black holes with a cosmological constant are generically not  smooth at the horizon~\cite{horowitz_almost_2022}.  It is therefore possible that extremal multi-black holes in de Sitter may exist under weaker differentiability assumptions.\footnote{These would be different to the dynamical multi-black holes in de Sitter~\cite{kastor_cosmological_1993}.}

The above method can also be applied to $\Lambda<0$ spacetimes that contain static extremal horizons with hyperbolic compact cross-sections\footnote{The spherical Schwarzschild-AdS black hole does not have an extremal limit.}. However, in contrast to the $\Lambda>0$ case, we find that if $\Lambda<0$ the transverse deformations are not unique even at first order; interestingly, the additional solutions correspond to harmonic 1-forms on the cross-section of the horizon. Therefore, the (first order) transverse deformations are determined by the first cohomology of the corresponding  compact hyperbolic surface of genus $g \ge 2$, which is $2g$-dimensional. Furthermore, even if we restrict to the trivial first-order deformation (these correspond to the extremal hyperbolic Schwarzschild-AdS solutions), we are unable to prove that the higher order deformations are unique as in the $\Lambda>0$ case. In particular, the Einstein equations also reduce to an eigenvalue equation for the laplacian, except for $\Lambda<0$ the eigenvalues are now positive. We find that a uniqueness theorem analogous to Theorem \ref{thm_dS}, for the extremal Schwarzschild-AdS solutions with a given compact hyperbolic cross-section, can be proven if and only if the first order deformations are trivial {\it and} the spectrum of the scalar laplacian on the corresponding compact hyperbolic surface does not contain any eigenvalues of the form $\lambda_n=n^2+n-2$ for integer $n\geq 2$ (if the  curvature is unit normalised). However, determining the spectrum of the laplacian on compact hyperbolic surfaces is an open problem, so we are unable to prove any definite uniqueness result for $\Lambda<0$. For certain special points in the moduli space of compact hyperbolic surfaces eigenvalues of this form can be realised~\cite{bonifacio_bootstrap_2021}, however,  presumably this is not the case generically and therefore we still expect uniqueness  to hold in this sense.  We discuss this further at the end of Section \ref{sec_deform}.

The organisation of this article is as follows. In Section \ref{sec_solutions} we review the Schwarzschild-(A)dS solutions.  In Section \ref{sec_deform} we introduce the notion of transverse deformations of extremal horizons and determine them to all orders for the extremal Schwarzschild-(A)dS horizons. In Appendix \ref{app_Ricci} we give the Ricci tensor in Gaussian null coordinates. In Appendix \ref{app_2tensor} we prove a decomposition theorem for traceless symmetric  $2$-tensors on compact hyperbolic surfaces.


\section{Extremal Schwarzschild (anti)-de Sitter  solutions} \label{sec_solutions} 

In this section we review the $d\geq 4$-dimensional Schwarzschild solutions with a cosmological constant $\Lambda$ and examine their extremal limits. Our conventions are such that the solutions satisfy the Einstein equation $R_{\mu\nu}= \Lambda g_{\mu\nu}$.

For $\Lambda>0$, the $d$-dimensional Schwarzschild-dS solutions are given by \cite{cardoso_nariai_2004}
\begin{equation}
	g = -\left(1 -\frac{m}{r^{d-3}}-\frac{\Lambda r^2}{d-1}\right)\td t^2 + \left(1 -\frac{m}{r^{d-3}}
	-\frac{\Lambda r^2}{d-1}\right)^{-1}\td r^2 + r^2 \td \Omega_{d-2}^2\;, \label{eq_SdSmetric}
\end{equation}
where $\td \Omega_{d-2}^2$ is the unit metric on the $(d-2)$-sphere $S^{d-2}$, and $0\le m\le m_{\max}$ is a mass parameter with
\begin{align}
	m_{\max} := \frac{2 r_0^{d-3}}{d-1}\;, \qquad  r_0 := \left(\frac{d-3}{\Lambda}\right)^{1/2}\;.
	\label{eq_mmax}
\end{align} 
For $m =0$ this gives de Sitter space, for $0<m<m_{\max}$ the Schwarzschild-dS black hole, which contains a black hole horizon and a cosmological horizon. For $m=m_{\max}$ one obtains the extremal Schwarzschild-dS 
solution, for which these two horizons coincide such that there is a degenerate horizon at $r= r_0$.
The Penrose diagram\footnote{One can also draw a time-reversed Penrose diagram, with the singularity at the bottom, and 
$\cal{J}^+$ at the top.} of its maximal analytic extension can be seen in Fig.~\ref{fig_eSdS_penrose}~\cite{lake_effects_1977,podolsky_structure_1999}. Note that $P$ are asymptotic points which can be 
reached by causal geodesics with $t=\const$ and for such observers $r= r_0$ is an event horizon. 
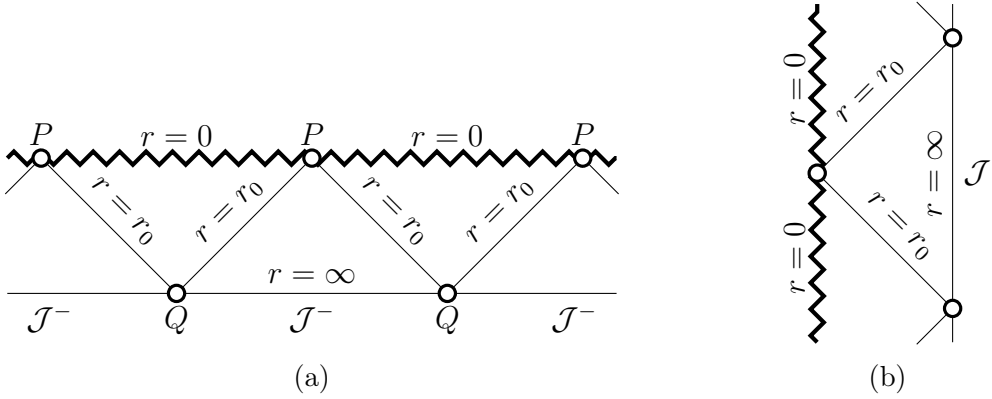
\begin{figure}
	\centering
	\begin{subfigure}[b]{0.6\textwidth}
		\centering
\begin{tikzpicture}[scale=0.90]

	\coordinate (BHa) at (-4, 2);
	\coordinate (BHb) at (0, 2);
	\coordinate (BHc) at (4, 2);
	\coordinate (LeftTop) at (-4.5, 2);
	\coordinate (RightTop) at (4.5, 2);

	\path (BHa) +(-45:2.8284)  coordinate  (BHaBottom);
	\path (BHb) +(-45:2.8284)  coordinate  (BHbBottom);
	\path (BHaBottom) +(180:2.5)  coordinate  (LeftBottom);
	\path (BHbBottom) +(0:2.5)  coordinate  (RightBottom);
	\path (BHa) + (-135:0.75) coordinate (LeftHorizonEnd);
	\path (BHc) + (-45:0.75) coordinate (RightHorizonEnd);

	\draw[decorate,decoration=zigzag, ultra thick] (LeftTop) --(BHa) -- node[midway, above] {$r=0$} (BHb) -- 
	node[midway, above] {$r=0$} (BHc)-- (RightTop);

	\draw (LeftHorizonEnd) -- (BHa) 
		-- node[midway, above, sloped] {$r= r_0$} (BHaBottom) 
		-- node[midway, above, sloped] {$r= r_0$} (BHb) 
		-- node[midway, above, sloped] {$r= r_0$} (BHbBottom) 
		-- node[midway, above, sloped] {$r= r_0$} (BHc)
		-- (RightHorizonEnd);

	\draw (LeftBottom) -- 
		node[near start, below] {$\cal{J}^-$}
		(BHaBottom) -- 
		node[midway, below] {$\cal{J}^-$} 
		node[midway, above] {$r = \infty$}
		(BHbBottom) -- 
		node[near end, below]{$\cal{J}^-$} 
		(RightBottom);

		\filldraw[color=black, fill=white, very thick](BHa) circle (0.12);
		\filldraw[color=black, fill=white, very thick](BHb) circle (0.12);
		\filldraw[color=black, fill=white, very thick](BHc) circle (0.12);
		\filldraw[color=black, fill=white, very thick](BHaBottom) circle (0.12);
		\filldraw[color=black, fill=white, very thick](BHbBottom) circle (0.12);

		\path (BHa) +(90:0.35) node (Ia) {$P$};
		\path (BHb) +(90:0.35) node (Ia) {$P$};
		\path (BHc) +(90:0.35) node (Ia) {$P$};
		\path (BHaBottom) +(-90:0.4) node (Ia) {$Q$};
		\path (BHbBottom) +(-90:0.4) node (Ia) {$Q$};

\end{tikzpicture}
\caption{}
\label{fig_eSdS_penrose}
\end{subfigure}
\begin{subfigure}[b]{0.3\textwidth}
	\centering
	\begin{tikzpicture}[scale=0.9]
		\coordinate (LeftBottom) at (-2, -2.5);
		\coordinate (BHb) at (-2, 0);
		\coordinate (LeftTop) at (-2, 2.5);

		\path (BHb) +(-45:2.8284)  coordinate  (BHaRight);
		\path (BHb) +(45:2.8284)  coordinate  (BHbRight);
		\path (BHaRight) +(-90:0.5)  coordinate  (RightBottom);
		\path (BHbRight) +(90:0.5)  coordinate  (RightTop);
		\path (BHaRight) + (-135:0.75) coordinate (BottomHorizonEnd);
		\path (BHbRight) + (135:0.75) coordinate (TopHorizonEnd);
		
		\draw[decorate,decoration=zigzag, ultra thick] (LeftBottom) --  node[midway, above, sloped]{$r=0$} (BHb) 
		--  node[midway, above, sloped]{$r=0$} (LeftTop);

		\draw (BottomHorizonEnd) -- (BHaRight) 
			-- node[midway, above, sloped] {$r= r_0$} (BHb) 
			-- node[midway, above, sloped] {$r= r_0$} (BHbRight) 
			-- (TopHorizonEnd);

		\draw (RightBottom) --  
			node[midway, right] {$\cal{J}$} 
			node[midway, above, sloped] {$r = \infty$}
			(RightTop);

		\filldraw[color=black, fill=white, very thick](BHb) circle (0.12);
		\filldraw[color=black, fill=white, very thick](BHbRight) circle (0.12);
		\filldraw[color=black, fill=white, very thick](BHaRight) circle (0.12);

	\end{tikzpicture}
	\caption{}
	\label{fig_eSAdS_penrose}
\end{subfigure}
\caption{Penrose diagrams for $(a)$ extremal Schwarzschild-dS~\cite{lake_effects_1977,podolsky_structure_1999} and $(b)$ extremal hyperbolic Schwarzschild-AdS solutions~\cite{mann_topological_1997}.}
\end{figure}

One can also take the $m\to m_{\max}$ limit and simultaneously `blow up' the outer region between the cosmological and 
black hole horizons~\cite{ginsparg_semiclassical_1983} to obtain the Nariai 
solution~\cite{nariai_new_1999,nariai_static_1999}. This is simply the vacuum solution dS$_2\times S^{d-2}$ with constant 
radii, which also arises as the near-horizon geometry of the extremal Schwarzschild-dS solution.

For $\Lambda<0$ the hyperbolic Schwarzschild-AdS solution is given by 
\begin{equation}
	g = -\left(-1 -\frac{m}{r^{d-3}}-\frac{\Lambda r^2}{d-1}\right)\td t^2 + \left(-1 -\frac{m}{r^{d-3}}
	-\frac{\Lambda r^2}{d-1}\right)^{-1}\td r^2 + r^2 \td \Sigma_{d-2}^2\;, \label{eq_SAdSmetric}
\end{equation}
where $\td \Sigma_{d-2}^2$ is the metric of $(d-2)$-dimensional unit hyperbolic space $H^{d-2}$ and $m_{\min}\le m$ with 
\begin{align}
	m_{\min}:=-\frac{2 r_0^{d-3}}{d-1}\;,\qquad  r_0 := \left(\frac{d-3}{|\Lambda|}\right)^{1/2}\;.
	\label{eq_mmin}
\end{align} 
(Note that $m_{\min}<0$.) To obtain horizons with compact cross-sections one takes a discrete quotient of the $d-2$ dimensional 
hyperbolic space. For $d=4$, it is known that one can obtain compact hyperbolic surfaces with any genus $g\ge 2$, hence 
these black holes are also known as topological black holes. In the extremal case $m=m_{\min}$ one obtains a solution 
with a degenerate horizon at $r= r_0$ and its Penrose diagram is depicted in Fig.~\ref{fig_eSAdS_penrose}~\cite{mann_topological_1997}. Its near-horizon geometry is AdS$_2\times H^{d-2}$ with constant radii, 
sometimes called the anti-Nariai solution~\cite{caldarelli_extremal_2000}.

The extremal Schwarzschild-dS and hyperbolic Schwarzschild-AdS solutions can be written in a unified way as 
\begin{equation}
	g= - s f(r)\td t^2 + s f(r)^{-1}\td r^2 + r^2 \gamma_s \;, \label{eq_metric_general}
\end{equation}
with $s=\pm1$, where $\gamma_1= \td \Omega_{d-2}^2$ and $\gamma_{-1}= \td \Sigma_{d-2}^2$, so $\text{Ric}(\gamma_s)= s (d-3)\gamma_s$, and 
\begin{equation}
	f(r):=1  - \frac{2}{d-1}\left(\frac{ r_0}{r}\right)^{d-3}-\frac{d-3}{d-1}\left(\frac{r}{ r_0}\right)^2\; .
	\label{eq_fr}
\end{equation}
It is easy to check that $f(r)$ has a double zero at $r= r_0$ and can be written as
\begin{equation}
	f(r) = -\frac{(r- r_0)^2}{(d-1) r_0^2}\left[(d-3)+
	2\sum_{k=1}^{d-3}k\left(\frac{r}{ r_0}\right)^{k-d+2}\right]\; ,  \label{eq_fdouble0}
\end{equation}
which also shows that it is strictly negative away from the horizon. Thus, in the $s=1$ case (extremal Schwarzschild-dS) the Killing vector field $\partial_t$ is spacelike outside the horizon and $r$ acquires the interpretation of a time coordinate.

\section{Transverse deformations of static extremal  horizons} \label{sec_deform} 

\subsection{Einstein equations near an extremal horizon}

Let $(M,g)$ be a $d$-dimensional spacetime satisfying the Einstein equation
\begin{equation}
R_{\mu\nu} = \Lambda g_{\mu\nu}\;, \label{eq_Einstein} 
\end{equation}
that contains a smooth degenerate (extremal) Killing horizon $\mathcal{H}$ of a Killing field $K$ with a compact spacelike cross-section $S$.  In a neighbourhood of  $\mathcal{H}$ we  introduce Gaussian null 
coordinates $(v, \rho, x^a)$ in terms of which the metric takes the form~\cite{moncrief_symmetries_1983, kunduri_classification_2013}, 
\begin{equation}
	g = \phi \td v^2 +2\td v \td \rho + 2\beta_a \td x^a \td v+ \mu_{ab}\td x^a \td x^b\;, 
	\label{eq_metric_Gaussian}
\end{equation}
where $K=\partial_v$ is the Killing field,  $L:= \partial_\rho$ is a transverse null geodesic field synchronised so the horizon is at $\rho=0$,  $(x^a)$ are an arbitrary chart on $S$, and  $\phi, \beta_a$  vanish at the horizon $\rho=0$. Furthermore, degeneracy of the horizon implies that $\partial_\rho \phi$ must also vanish at the horizon, so we can introduce smooth functions $F, h_a$, such that   
\begin{align}
	\phi=: \rho^2F\;, \qquad \beta_a =: \rho h_a\;. \label{eq_redef}
\end{align}
This coordinate system is unique up to a choice of cross-section $S$ and coordinates on $S$. The quantities $F$, $h=h_a \td x^a$, $\mu=\mu_{ab}\td x^a \td x^b$ can be identified with a function, 1-form and Riemannian metric on the codimension-2 surfaces $S_{v,\rho}$ of constant $(v, \rho)$ which include the cross-section $S$ at $\rho=0$.

The near-horizon geometry is defined as follows. Consider the 1-parameter family of metrics $g_{\epsilon}:= \phi_\epsilon^* g$ defined by the scaling diffeomorphism $\phi_\epsilon: (v, \rho, x^a)\mapsto (v/\epsilon, \epsilon\rho, x^a)$ with $\epsilon >0$.  The near-horizon geometry is the limit $\epsilon\to 0$ of $g_\epsilon$,  which gives
\begin{equation}
	g^{[0]} = \rho^2\mathring{F}(x)\td v^2 +2\td v \td \rho + 2\rho\mathring h_a(x) \td x^a \td v+ \mathring \mu_{ab}(x)\td x^a \td x^b \;,
\end{equation}
where the superscript $\circ$ denotes the value of the quantity at $\rho=0$, i.e. $\mathring F(x)= F(0, x)$ etc. The near-horizon data $(\mathring F, \mathring h, \mathring \mu)$ correspond to a function, 1-form and Riemannian metric on $S$.   The Einstein equation \eqref{eq_Einstein} for the near-horizon geometry (which itself must be a solution),  or the restriction of the Einstein equation for full spacetime metric \eqref{eq_metric_Gaussian} to $\rho=0$, are equivalent to the following equations for the near-horizon data on $S$,
\begin{equation}
\mathring{\mathcal{R}}_{ab}=\frac{1}{2} \mathring h_a \mathring h_b -\mathring \nabla_{(a} \mathring h_{b)} + \Lambda \mathring \mu_{ab}, \qquad \mathring F= \frac{1}{2} \mathring h^a \mathring h_a- \frac{1}{2}\mathring\nabla_a \mathring h^a+ \Lambda  \; ,
\end{equation}
where $\mathring \nabla_a, \mathring{\mathcal{R}}_{ab}$ are the metric connection and Ricci tensor of the metric $\mathring\mu$ on $S$.  The classification of solutions to these horizon equations has been extensively studied in the literature~\cite{kunduri_classification_2013}. 

The first order {\it  transverse deformation}  of an extremal horizon was introduced in~\cite{li_transverse_2016}. It is defined by $g^{[1]}:= \frac{\td}{\td \epsilon} g_\epsilon|_{\epsilon=0}$ and encodes the first transverse derivatives of the metric data at the horizon, that is, $F^{(1)}:=( \partial_\rho F)_{\rho=0}$, $h_a^{(1)} :=( \partial_\rho h_a )_{\rho=0}$ and $\mu_{ab}^{(1)} :=( \partial_\rho \mu_{ab})_{\rho=0}$.   The Einstein equation \eqref{eq_Einstein} implies $g^{[1]}$ satisfies a linearised Einstein equation in the background near-horizon geometry and it was shown that this is equivalent to  $\mu_{ab}^{(1)}$ satisfying a linear elliptic PDE (once gauge fixed) on $(S, \mathring \mu)$ with the rest of the first order data then determined algebraically.   

We define the higher order transverse deformations of an extremal horizon similarly by 
\begin{equation}
g^{[n]} := \left. \frac{\td^n  g_\epsilon}{\td \epsilon^n} \right|_{\epsilon=0}
\end{equation}
so $g^{[0]}$ agrees with the near-horizon geometry, $g^{[1]}$ with the first order deformation, and $n\geq 2$ gives the higher order deformations.  Explicitly, the $n$-th order deformation $g^{[n]}$ encodes the $n$-th transverse derivatives at the horizon of the metric data, that is, 
\begin{equation}
F^{(n)}:=( \partial^n_\rho F)_{\rho=0}, \qquad h_a^{(n)} :=( \partial^n_\rho h_a )_{\rho=0}, \qquad \mu_{ab}^{(n)} :=( \partial^n_\rho \mu_{ab})_{\rho=0}  \; .
\end{equation}
(Note that this is equivalent to $\phi^{(n+2)}, \beta^{(n+1)}_a, \mu^{(n)}_{ab}$).
In general for any function $X$, we will adopt the notation $X^{(n)}:= (\partial_\rho^n X)_{\rho=0}$ for $n \ge 0$. Therefore, for analytic spacetimes,  if all higher order transverse deformations  of an extremal horizon are known,  the exact solution is fully determined.  

The Einstein equation \eqref{eq_Einstein}  implies that the $n$-th order transverse deformation satisfies $\text{Ric}^{[n]}(g)= \Lambda g^{[n]} $ for all $n\geq 0$, where $\text{Ric}^{[n]}(g):=   \frac{\td^n}{\td \epsilon^n} \text{Ric}(g^\epsilon)|_{\epsilon=0}$. In fact, we will implement the Einstein equations more directly and simply evaluate the $\rho$-derivatives at $\rho=0$ of the components of the Einstein equation \eqref{eq_Einstein} in Gaussian null coordinates, that is, 
\begin{equation}
R_{\mu\nu}^{(n)}= \Lambda g_{\mu\nu}^{(n)}  \; ,  \label{eq_Einstein_nth}
\end{equation}
where the notation is as above so $R_{\mu\nu}^{(n)} := (\partial_\rho^nR_{\mu\nu})_{\rho=0}$ and similarly for $g_{\mu\nu}$.
For this computation we need the Ricci tensor for a general metric in Gaussian null coordinates \eqref{eq_metric_Gaussian}. This has been written down in a convenient form in~\cite{moncrief_symmetries_1983} (see also  the Appendix of \cite{hollands_stationary_2009}).
The components of the Ricci tensor of \eqref{eq_metric_Gaussian} relevant for our calculation are given in Appendix \ref{app_Ricci}.

\subsection{Horizon geometry and first order transverse deformations} 

For every dimension $d\geq 4$, we  consider the following  near-horizon data
\begin{equation}
\mathring F= \Lambda, \qquad \mathring h_a=0, \qquad   \;   \mathcal{\mathring{R}}_{abcd}= \frac{\Lambda}{d-3} (\mathring \mu_{ac}\mathring\mu_{bd}- \mathring\mu_{ad}\mathring\mu_{bc})  \; ,\label{eq_NHdata}
\end{equation}
where $\mathcal{\mathring{R}}_{abcd}$ is the Riemann tensor of $(S, \mathring \mu)$, so in particular $\mathcal{\mathring{R}}_{ab}= \Lambda \mathring{\mu}_{ab}$, which corresponds to the most general static near-horizon geometry with maximally symmetric compact cross-sections $S$~\cite{chrusciel_nonexistence_2005, bahuaud_static_2022, wylie_rigidity_2023}. In fact, for $d=4$ and any $\Lambda$, it has been shown that this is the most general static near-horizon geometry with compact $S$~\cite{chrusciel_nonexistence_2005}. Furthermore,  \eqref{eq_NHdata} corresponds to the near-horizon data of the $ d\geq 4$ extremal Schwarzschild-(A)dS solution (\ref{eq_metric_general}). To see this, switch to (ingoing) 
Eddington-Finkelstein coordinates defined by $\td v = \td t + s f(r)^{-1}\td r$, $\rho=r- r_0$, so (\ref{eq_metric_general})  becomes
\begin{equation}
	g = -s f(r_0+\rho)\td v^2 + 2\td v \td \rho + ( r_0+\rho)^2\gamma_s \label{eq_metric_EF} \; ,
\end{equation}
and using (\ref{eq_fdouble0}) one can immediately read off that the near-horizon data  is \eqref{eq_NHdata} with $\mathring\mu = r_0^2\gamma_s$.

For the near-horizon geometry \eqref{eq_NHdata} with $\Lambda>0$, the first order transverse deformations have been determined~\cite{li_electrovacuum_2019}. For completeness, we will now give an alternate derivation of this result.

\begin{prop}\label{prop_1st}
Consider a spacetime that satisfies the Einstein equation \eqref{eq_Einstein} and contains an extremal horizon with a compact cross-section $S$ with near-horizon data \eqref{eq_NHdata}. If $d \geq 4$ and $\Lambda>0$, the first order transverse deformations are given by
\begin{align}
	\mu^{(1)}_{ab}=C\mathring\mu_{ab}\;, \qquad \beta_a^{(2)}=0\;, \qquad \phi^{(3)}=-  C(d-2)\Lambda \; ,    \label{eq_1st}
\end{align}
where $C$ is an arbitrary constant.
\end{prop}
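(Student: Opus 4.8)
The plan is to substitute the Gaussian null coordinate form of the Ricci tensor (Appendix \ref{app_Ricci}) into the Einstein equation \eqref{eq_Einstein}, differentiate once in $\rho$, and evaluate at $\rho=0$, i.e. to impose \eqref{eq_Einstein_nth} with $n=1$ and read off each tensor component. The right-hand side is almost trivial: since $\mathring h_a=0$ one checks $g^{(1)}_{vv}=g^{(1)}_{va}=g^{(1)}_{v\rho}=g^{(1)}_{\rho\rho}=g^{(1)}_{\rho a}=0$, so the only nonvanishing component is $R^{(1)}_{ab}=\Lambda\mu^{(1)}_{ab}$. Throughout I would feed in the near-horizon data \eqref{eq_NHdata}, using $\mathring F=\Lambda$, $\mathring h_a=0$ and the maximal symmetry $\mathring{\mathcal{R}}_{abcd}=\tfrac{\Lambda}{d-3}(\mathring\mu_{ac}\mathring\mu_{bd}-\mathring\mu_{ad}\mathring\mu_{bc})$ to collapse the background coefficients in the linearised equations. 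The components then organise in the usual way: the $(\rho a)$ equation is a momentum constraint expressing $h^{(1)}_a$ in terms of $\mathring\nabla^b\mu^{(1)}_{ab}$ and $\mathring\nabla_a(\mathring\mu^{cd}\mu^{(1)}_{cd})$, the $(\rho\rho)$ and $(v\rho)$ equations fix $F^{(1)}$ in terms of the trace of $\mu^{(1)}_{ab}$, and the $(ab)$ equation is the main elliptic equation. I would also record the elementary relations $\beta^{(2)}_a=2h^{(1)}_a$ and $\phi^{(3)}=6F^{(1)}$ coming from \eqref{eq_redef}, so that the three quantities in \eqref{eq_1st} are recovered once $\mu^{(1)}_{ab}$, $h^{(1)}_a$ and $F^{(1)}$ are known.

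The key step is the $(ab)$ equation. I would split $\mu^{(1)}_{ab}=\tfrac{1}{d-2}\tau\,\mathring\mu_{ab}+\hat\mu^{(1)}_{ab}$ into its $\mathring\mu$-trace $\tau:=\mathring\mu^{ab}\mu^{(1)}_{ab}$ and a traceless part $\hat\mu^{(1)}_{ab}$. Using \eqref{eq_NHdata} to evaluate all background curvature terms and substituting the momentum constraint to remove $h^{(1)}_a$, the traceless sector should decouple from the scalars $\tau,F^{(1)}$ and satisfy a pure eigentensor equation on $(S,\mathring\mu)$ of the schematic form $\mathring\nabla^c\mathring\nabla_c\,\hat\mu^{(1)}_{ab}=\kappa\,\hat\mu^{(1)}_{ab}$, where $\kappa$ is a positive constant proportional to $\Lambda$. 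Contracting with $\hat\mu^{(1)ab}$, integrating over the compact cross-section and integrating by parts would then give
\[
\int_S\left(|\mathring\nabla\hat\mu^{(1)}|^2+\kappa\,|\hat\mu^{(1)}|^2\right)\td A=0,\qquad \kappa>0,
\]
which forces $\hat\mu^{(1)}_{ab}=0$. This is precisely the statement that the eigenvalue required for a nontrivial deformation lies outside the nonpositive spectrum of $\mathring\nabla^c\mathring\nabla_c$, and it is the one place where positivity of $\Lambda$ is essential.

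It then remains to treat the scalar sector. With $\hat\mu^{(1)}_{ab}=0$ we have $\mu^{(1)}_{ab}=\tfrac{1}{d-2}\tau\,\mathring\mu_{ab}$, and I would combine the trace of the $(ab)$ equation with the $(\rho\rho)$ and $(v\rho)$ equations, which close among $\tau$ and $F^{(1)}$. A further integral estimate (or harmonic decomposition) of the same type should force the non-constant modes of $\tau$ to vanish for $\Lambda>0$, so that $\tau=(d-2)C$ for a constant $C$; hence $\mu^{(1)}_{ab}=C\mathring\mu_{ab}$. The momentum constraint, evaluated on this covariantly constant $\mu^{(1)}_{ab}$, has vanishing source and gives $h^{(1)}_a=0$, whence $\beta^{(2)}_a=0$; the scalar equation then fixes $F^{(1)}=-\tfrac{1}{6}C(d-2)\Lambda$, giving $\phi^{(3)}=6F^{(1)}=-C(d-2)\Lambda$, with $C$ remaining free as the genuine deformation parameter (matching extremal Schwarzschild--dS, cf. \eqref{eq_metric_EF}, with $C=2/r_0$).

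The main obstacle is establishing the sign $\kappa>0$ in the traceless equation on the round sphere $S^{d-2}$: this requires carefully assembling the several curvature contributions produced by the first $\rho$-derivative of the Ricci tensor, simplified through the maximal symmetry \eqref{eq_NHdata}, and checking that the net coefficient multiplying $\hat\mu^{(1)}_{ab}$ has the favourable sign. A secondary point is verifying that the traceless sector genuinely decouples, i.e. that no $h^{(1)}_a$ or $F^{(1)}$ contributions survive in the tensor equation after the constraints are imposed; on a maximally symmetric $S$ this is guaranteed by the standard scalar/vector/tensor decomposition of symmetric $2$-tensors (for the sphere there are no harmonic $1$-forms and no nontrivial parallel traceless symmetric $2$-tensors), which also underlies the analogous decomposition proved in Appendix \ref{app_2tensor}.
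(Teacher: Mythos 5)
Your overall strategy for the tensor sector coincides with the paper's: impose the $\rho$-differentiated Einstein equations component by component, reduce the traceless part of $\mu^{(1)}_{ab}$ to an eigentensor equation of the form $-\mathring\nabla^2\hat\mu^{(1)}_{ab}=\kappa\,\hat\mu^{(1)}_{ab}$ with $\kappa=-2\Lambda\tfrac{d-2}{d-3}<0$, and conclude $\hat\mu^{(1)}_{ab}=0$ from positivity of $-\mathring\nabla^2$ on compact $S$. That part is sound. The genuine gap is in your scalar sector. You assert that ``a further integral estimate (or harmonic decomposition) of the same type should force the non-constant modes of $\tau$ to vanish''. No such estimate exists: the trace $\tau:=\mathring\mu^{ab}\mu^{(1)}_{ab}$ is \emph{pure gauge} and is left completely undetermined by the field equations. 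Indeed, the raw momentum constraint is $\beta^{(2)}_a=-\mathring\nabla^b\mu^{(1)}_{ab}+\mathring\nabla_a\tau$ (you correctly record the trace-gradient term), and once this is substituted into the $(ab)$ equation one finds
\begin{equation*}
\mathring\Delta_L\mu^{(1)}_{ab}+\tfrac{1}{2}\mathring\nabla_a\mathring\nabla_b\tau=0\;,
\end{equation*}
whose $\mathring\mu$-trace is identically $0=0$ on the maximally symmetric background. Moreover the traceless projection of this equation is sourced by the traceless Hessian of $\tau$, so the traceless sector does \emph{not} decouple from $\tau$ as you claim: the general solution of the first-order system is $\mu^{(1)}_{ab}=C\mathring\mu_{ab}+\mathring\nabla_a\mathring\nabla_b f$ for an arbitrary function $f$ on $S$, with $\beta^{(2)}$ and $\phi^{(3)}$ shifted correspondingly.

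The missing idea is the residual coordinate freedom in Gaussian null coordinates: the choice of cross-section $S$ (a ``supertranslation'') is not unique, and under a change of cross-section generated by $f$ one has $\mu'^{(1)}_{ab}=\mu^{(1)}_{ab}+\mathring\nabla_a\mathring\nabla_b f$, hence $\tau'=\tau+\mathring\nabla^2 f$. The paper fixes this gauge by solving the Poisson equation $\mathring\nabla^2 f=C(d-2)-\tau$ on compact $S$ (solvable since the right-hand side can be arranged to have zero mean), after which $\tau=C(d-2)$ is constant, the Hessian source drops out, and the eigentensor equation you derived applies to the whole of $\hat\mu^{(1)}_{ab}$. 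The statement of the Proposition is implicitly made in this gauge; without the gauge-fixing step your argument cannot close, because the equations alone only determine $\mu^{(1)}_{ab}$ up to the orbit $C\mathring\mu_{ab}+\mathring\nabla_a\mathring\nabla_b f$. Everything downstream of this point in your proposal (vanishing of $\beta^{(2)}_a$, the value of $\phi^{(3)}=6F^{(1)}=-C(d-2)\Lambda$, the identification $C=2/r_0$) is correct once the gauge is fixed.
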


\begin{proof} 
As discussed in~\cite{li_transverse_2016}, there is a gauge freedom that leaves the near-horizon data invariant, but 
changes the first (and higher) order deformations. This freedom corresponds to a change in the spatial cross-section 
$S$, analogous to `supertranslations' in asymptotic symmetries.  For the horizon data \eqref{eq_NHdata}, the first transverse deformation 
of  $\mu_{ab}$ transforms under such a gauge transformation as 
\begin{equation}
	\mu'^{(1)}_{ab} =\mu^{(1)}_{ab} + \mathring\nabla_a\mathring\nabla_b f\;,  
\end{equation}
where $f$ is an arbitrary function on $S$ that generates the gauge transformation. We may fix this gauge by requiring that 
\begin{equation}
	\mathring{\mu}^{ab}\mu^{(1)}_{ab} = C(d-2) \; , \label{eq_Cdef}
\end{equation}
for some constant $C$, which is always possible by existence 
results for the Poisson-equation on compact manifolds. Note that this completely fixes the Gaussian null coordinates 
(up to choice of coordinates on $S$).

Now we implement the Einstein equations (\ref{eq_Einstein_nth}) for the first order deformation of the near-horizon data \eqref{eq_NHdata}, where recall that the Ricci tensor in Gaussian null coordinates is given in  Appendix \ref{app_Ricci}. We find that the Einstein equations ${R}^{(0)}_{\rho a}=0$ and $R^{(1)}_{v\rho}=0$  
yield
\begin{align}
	\phi^{(3)} &= -C(d-2)\Lambda - \mathring\nabla^a\beta_a^{(2)}\label{eq_F_1}  \; ,\\
	\beta_a^{(2)} &= -\mathring\nabla^b\mu^{(1)}_{ab}\,\label{eq_h_1}  \; ,
\end{align}
respectively,  hence ${\phi}^{(3)}$ and $\beta^{(2)}$ (or by \eqref{eq_redef} equivalently $F^{(1)}$ and $h^{(1)}$) are determined by $\mu^{(1)}_{ab}$. 
Next, we find that the Einstein equation $R^{(1)}_{ab}=\Lambda\mu^{(1)}_{ab}$ gives 
\begin{equation}
	\mathcal{R}^{(1)}_{ab} = \mathring\nabla_{(a} \mathring\nabla^c\mu^{(1)}_{b)c}\;, \label{eq_mu_1}
\end{equation}
where $\mathcal{R}_{ab}$ is the Ricci tensor of the codimension-2 surfaces $(S_{v,\rho}, \mu)$ and we have used (\ref{eq_NHdata}) and (\ref{eq_h_1}) to eliminate $\beta^{(2)}_a$.
The variation of the Ricci tensor is given by
\begin{equation}
	\mathcal{R}^{(1)}_{ab} = \mathring\Delta_L\mu^{(1)}_{ab} + \mathring\nabla_{(a}v^{(1)}_{b)}
	\;,\label{eq_ricci_pert}
\end{equation}
where $\mathring\Delta_L$ is the Lichnerowicz operator of $(S, \mathring \mu)$ which explicitly is
\begin{equation}
	\mathring\Delta_L\mu^{(1)}_{ab}:= -\frac{1}{2}\mathring\nabla^2\mu^{(1)}_{ab} - \mathcal{\mathring{R}}_a{}^c{}_b{}^d\mu^{(1)}_{cd} 
	+ \mathcal{\mathring{R}}_{(a}^c\mu^{(1)}_{b)c}\;,
	\label{eq_def_lichnerowicz}
\end{equation}
and 
\begin{equation}
	v^{(1)}_b:= \mathring\nabla^c\mu^{(1)}_{bc} - \frac{1}{2}\mathring\nabla_b\left(\mathring\mu^{cd}\mu^{(1)}_{cd}\right)\;. 
	\label{eq_def_v}
\end{equation}
Finally, using (\ref{eq_ricci_pert}), (\ref{eq_Cdef}) and \eqref{eq_NHdata}, equation (\ref{eq_mu_1}) reduces to 
\begin{equation}
	-\mathring \nabla^2  \hat\mu^{(1)}_{ab}  = -2\Lambda\frac{d-2}{d-3} \hat\mu^{(1)}_{ab}\;, \label{eq_ev_1}
\end{equation}
where $\hat\mu^{(1)}_{ab}:= \mu^{(1)}_{ab}-C\mu_{ab}$ is the traceless part of $\mu^{(1)}_{ab}$.

The laplacian $-\mathring\nabla^2$ is positive-definite for a compact manifold, hence for $\Lambda>0$ the only solution is $\hat{\mu}^{(1)}_{ab}=0$. Therefore, for the $d\ge4$ extremal Schwarzschild-dS horizon ($\Lambda>0$) the first order transverse deformations are  given by \eqref{eq_1st}, where the rest of the first order data is fixed by \eqref{eq_F_1}, \eqref{eq_h_1}, which establishes Proposition \ref{prop_1st}. 
\end{proof}

This result shows that the only nonvanishing first order transverse deformations to the extremal Schwarzschild-dS horizon are those corresponding to the full extremal Schwarzschild-dS solution.  

The above argument fails for $\Lambda<0$ because the eigenvalue in (\ref{eq_ev_1}) are in this case positive. In the four-dimensional case, the first order deformations are given as follows.
\begin{prop}\label{prop_AdS_1}
	Consider a spacetime that satisfies the Einstein equation \eqref{eq_Einstein} and contains an extremal horizon with a compact cross-section $S$ with near-horizon data \eqref{eq_NHdata}. If $d=4$ and $\Lambda<0$, the first order deformations are given by 
	\begin{align}
		\mu^{(1)}_{ab} = C\mathring\mu_{ab} + \sum_{i=1}^{2g}C_i\mathring\nabla_{(a}\xi^i_{b)}\;, \qquad \beta^{(2)}_a= -\Lambda \sum_{i=1}^{2g}C_i\xi^i_a\;, \qquad \phi^{(3)} = -C(d-2)\Lambda\;,
	\end{align}
	where $g\ge2$ is the genus of the hyperbolic surface $S$, $C$ and $C_i$ are constants, and $\{\xi^i\}_{i=1}^{2g}$ is a basis of harmonic 1-forms on $S$ with respect to the Hodge--de Rham laplacian.
\end{prop}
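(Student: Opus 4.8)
The plan is to run the argument of Proposition~\ref{prop_1st} verbatim up to the eigenvalue equation \eqref{eq_ev_1}, and then, since the sign of the eigenvalue now obstructs the vanishing conclusion, to identify the eigenspace explicitly using the geometry of the cross-section. First I would impose the same gauge \eqref{eq_Cdef}, so that $\mathring\mu^{ab}\mu^{(1)}_{ab}=C(d-2)$ is constant, and extract $\phi^{(3)}$ and $\beta^{(2)}_a$ algebraically from $R^{(0)}_{\rho a}=0$ and $R^{(1)}_{v\rho}=0$ via \eqref{eq_F_1}--\eqref{eq_h_1}. The component $R^{(1)}_{ab}=\Lambda\mu^{(1)}_{ab}$ again reduces, through the Lichnerowicz form \eqref{eq_def_lichnerowicz} of the linearised Ricci tensor, to \eqref{eq_ev_1}, which for $d=4$ reads $-\mathring\nabla^2\hat\mu^{(1)}_{ab}=-4\Lambda\hat\mu^{(1)}_{ab}$ for the traceless part $\hat\mu^{(1)}_{ab}$. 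For $\Lambda<0$ the eigenvalue $-4\Lambda>0$ lies in the spectrum of the positive operator $-\mathring\nabla^2$, so the problem becomes to determine the corresponding eigenspace of traceless symmetric $2$-tensors on the compact hyperbolic surface $(S,\mathring\mu)$.

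Next I would invoke the decomposition theorem of Appendix~\ref{app_2tensor}: every traceless symmetric $2$-tensor on $(S,\mathring\mu)$ splits orthogonally as $\hat\mu^{(1)}_{ab}=(D\xi)_{ab}+q_{ab}$, where $(D\xi)_{ab}:=\mathring\nabla_{(a}\xi_{b)}-\tfrac12\mathring\mu_{ab}\mathring\nabla^c\xi_c$ is the traceless conformal Killing operator applied to a $1$-form $\xi$, and $q_{ab}$ is transverse-traceless. For genus $g\ge2$ there are no conformal Killing fields, so $D$ is injective and $\xi$ is uniquely determined, while $q_{ab}$ ranges over the holomorphic quadratic differentials. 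Since $-\mathring\nabla^2$ is self-adjoint and preserves the transverse-traceless subspace, it preserves this splitting, so the eigenvalue equation can be analysed on each summand separately.

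The two summands are then disposed of by direct curvature computations on the constant-curvature surface, where the Riemann tensor is fixed by \eqref{eq_NHdata}. For the transverse-traceless part, a short computation in a local complex coordinate gives $\mathring\nabla^2 q_{ab}=2\Lambda q_{ab}$, i.e. eigenvalue $-2\Lambda\neq-4\Lambda$; as $\Lambda\neq0$ this forces $q_{ab}=0$. For the conformal Killing part I would establish the intertwining identity
\[ -\mathring\nabla^2(D\xi)_{ab}=\big(D(\Delta_{\mathrm{dR}}-4\Lambda)\xi\big)_{ab}, \]
where $\Delta_{\mathrm{dR}}=-\mathring\nabla^2+\Lambda$ is the Hodge--de Rham laplacian on $1$-forms (using $\mathring{\mathcal R}_{ab}=\Lambda\mathring\mu_{ab}$ in the Weitzenb\"ock formula). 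By injectivity of $D$ the eigenvalue equation then collapses to $\Delta_{\mathrm{dR}}\xi=0$, so $\xi$ must be harmonic, and choosing a basis $\{\xi^i\}_{i=1}^{2g}$ of the $2g$-dimensional space of harmonic $1$-forms yields $\hat\mu^{(1)}_{ab}=\sum_i C_i\mathring\nabla_{(a}\xi^i_{b)}$. I would emphasise that this conclusion is insensitive to the detailed Laplace spectrum of $S$: an exact or co-exact potential built from a scalar mode of eigenvalue $\nu$ shifts the tensor eigenvalue to $\nu-4\Lambda$, which equals $-4\Lambda$ only for $\nu=0$, i.e. for constants, and so contributes nothing. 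This is precisely why obstructions to uniqueness appear only at higher orders, where nonzero scalar eigenvalues $\lambda_n=n^2+n-2$ can enter.

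Finally I would read off the remaining data as in Proposition~\ref{prop_1st}: from \eqref{eq_h_1}, $\beta^{(2)}_a=-\mathring\nabla^b\mu^{(1)}_{ab}=-\mathring\nabla^b\hat\mu^{(1)}_{ab}$, and a one-line commutator computation using harmonicity gives $\mathring\nabla^b\mathring\nabla_a\xi^i_b=\Lambda\xi^i_a$, so $\beta^{(2)}_a=-\Lambda\sum_i C_i\xi^i_a$; substituting into \eqref{eq_F_1} and using $\mathring\nabla^a\xi^i_a=0$ produces $\phi^{(3)}=-C(d-2)\Lambda$. The main obstacle is the decomposition theorem itself: establishing existence, uniqueness and orthogonality of the splitting requires Fredholm theory for $D$ together with the identification of its cokernel with the finite-dimensional space of holomorphic quadratic differentials. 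This, with the two eigenvalue computations (the value $-2\Lambda$ on transverse-traceless tensors and the intertwining identity), is where the real work lies; the rest is a routine adaptation of the $\Lambda>0$ argument.
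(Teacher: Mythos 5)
Your argument is correct and reaches the eigenvalue equation \eqref{eq_ev_1} exactly as the paper does; the difference lies only in how the eigenspace for the eigenvalue $-4\Lambda=4|\Lambda|$ is identified. The paper's route (Appendix \ref{app_2tensor}) is to decompose an arbitrary traceless symmetric tensor into the four explicit families $S^{\{\lambda+4\}},P^{\{\lambda+4\}},V^{\{4\},i},Y^{\{2\}}$ — i.e.\ it first produces the 1-form $X$ by the same Fredholm/self-adjointness argument you flag as the ``main obstacle'' (this is precisely the content of the Appendix, so it is not an open gap), then Hodge-decomposes $X=\td f+\star\td g+\xi$ and expands $f,g$ in scalar eigenfunctions, reading off the tensor eigenvalues family by family. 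You instead keep the York splitting $T=D\xi+q$ unrefined and dispose of the image of $D$ in one stroke via the intertwining identity $-\mathring\nabla^2\circ D=D\circ(\Delta_{\mathrm{dR}}-4\Lambda)$ together with injectivity of $D$ on a genus $g\ge2$ surface; this is consistent with the paper's stated eigenvalues (it reproduces $(\lambda+4)|\Lambda|$ for the scalar-derived tensors and $4|\Lambda|$ for the harmonic sector) and is arguably cleaner, since it never requires a spectral expansion of the potentials and makes transparent why only $\nu=0$ scalar modes could contribute at this order. The trade-off is that the paper's explicit eigentensor basis \eqref{eq_2tensors} is reused verbatim for the higher-order analysis leading to the eigenvalue condition \eqref{eq_eigenvalues_n}, whereas your intertwining identity would have to be supplemented by the scalar expansion there anyway. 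Your concluding computations of $\beta^{(2)}_a$ and $\phi^{(3)}$ from \eqref{eq_F_1}--\eqref{eq_h_1} using $\mathring\nabla^b\mathring\nabla_{(a}\xi_{b)}=\Lambda\xi_a$ and co-closedness of harmonic 1-forms agree with the paper.
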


\begin{proof}
As in the $\Lambda>0$ case, the first order deformations are determined by solutions to (\ref{eq_ev_1}) through (\ref{eq_F_1}-\ref{eq_h_1}). In Appendix \ref{app_2tensor} we show that the space of traceless symmetric  $(0,2)$ tensor fields on $S$ is spanned by 
\begin{align}
	S^{\{\lambda+4\}}_{ab} &:= \mathring{\nabla}_a\mathring{\nabla}_b f^{\{\lambda\}} - \frac{1}{2}\mathring \mu_{ab}\mathring\nabla^2f^{\{\lambda\}}\;, \qquad
	P^{\{\lambda+4\}}_{ab} := \mathring{\nabla}_{(a}\mathring\epsilon_{b)c}\mathring\nabla^cf^{\{\lambda\}}\;, \nonumber\\
	V^{\{4\}, i}_{ab} &:= \mathring{\nabla}_{(a}\xi^i_{b)}\;, \qquad\qquad\qquad\qquad\qquad Y^{\{2\}}_{ab}\;,\label{eq_2tensors}
\end{align}
where $\mathring\epsilon$ is the volume form of $(S, \mathring\mu)$, $f^{\{\lambda\}}$ are eigenfunctions of $-\mathring\nabla^2$ with 
eigenvalues $\lambda |\Lambda| > 0$, $\{\xi^i\}_{i=1}^{2g}$ is a basis of harmonic 1-forms on $S$ with respect to the Hodge--de Rham laplacian $\Delta=-\mathring{\nabla}^2+\Lambda$, and $Y^{\{2\}}$ is a  divergence-free traceless symmetric  $(0,2)$ tensor.  Note that the space of harmonic 1-forms for a genus $g$ hyperbolic surface is $2g$-dimensional.   Furthermore,  $S^{\{\lambda+4\}}, P^{\{\lambda+4\}}, V^{\{4\}}, Y^{\{2\}}$ are eigentensors of $-\mathring\nabla^2$ with eigenvalues $(\lambda+4)|\Lambda|$, $(\lambda+4)|\Lambda|$, $4|\Lambda|$ and $2|\Lambda|$, respectively. Now, comparison with (\ref{eq_ev_1}) yields that $\hat\mu^{(1)}_{ab}$ must be in the span of $V_{ab}^{\{4\}}$.  The  form of $\beta^{(2)}$ and $\phi^{(3)}$ follows from (\ref{eq_F_1}-\ref{eq_h_1}).
\end{proof}

The first order deformations with $C_i=0$ in Proposition \ref{prop_AdS_1} correspond to those of the extremal hyperbolic Schwarzschild-AdS solution. Therefore, the above result shows that the first order transverse deformations of the extremal hyperbolic Schwarzschild-AdS horizon are not unique, but parameterised by harmonic 1-forms on $S$. Interestingly, the first cohomology of $S$ provides an obstruction to the uniqueness of these deformations. This non-uniqueness was not found in \cite{li_electrovacuum_2019} since that work did not consider the tensors $V^{\{4\}}$.

\subsection{Second and higher order transverse deformations}\label{ssec_second} 

We now consider higher order transverse deformations of extremal horizons with near-horizon data \eqref{eq_NHdata} and first order data \eqref{eq_1st}.   It is convenient to first consider the second order deformations. 
\begin{prop}\label{prop_2nd}
Consider a spacetime with an extremal horizon as in Proposition \ref{prop_1st} and $\Lambda>0$. The second order transverse deformations are given by
\begin{align}
\mu^{(2)}_{ab}=\frac{C^2}{2}\mathring \mu_{ab}	\;, \qquad \beta_a^{(3)}=0\;, \qquad \phi^{(4)}= \frac{1}{2} C^2\Lambda d (d-2) \; .  \label{eq_2nd}
\end{align}
\end{prop}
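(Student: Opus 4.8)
The plan is to run the argument of Proposition~\ref{prop_1st} one order higher: I feed the first-order data \eqref{eq_1st} into the field equations as a source, and I supply one additional equation to determine the trace of $\mu^{(2)}$. This trace was pure gauge at first order, but the supertranslation freedom was already exhausted in fixing \eqref{eq_Cdef}, so at second order it must be determined by the equations. The equation that does this is the $\rho\rho$-component of \eqref{eq_Einstein}: since $g_{\rho\rho}=0$ one has $R_{\rho\rho}=0$ identically, the Raychaudhuri equation for the transverse null congruence generated by $L=\partial_\rho$. Because $\mathring h_a=0$ and $\phi,\beta_a$ vanish to second order in $\rho$, the $\phi$- and $\beta$-terms in $R_{\rho\rho}$ (Appendix~\ref{app_Ricci}) drop out at $\rho=0$, and the equation $R^{(0)}_{\rho\rho}=0$ reduces to
\[
  \mathring\mu^{ab}\mu^{(2)}_{ab}=\tfrac12\,\mathring\mu^{ac}\mathring\mu^{bd}\mu^{(1)}_{ab}\mu^{(1)}_{cd}\;.
\]
Substituting $\mu^{(1)}_{ab}=C\mathring\mu_{ab}$ gives $\mathring\mu^{ab}\mu^{(2)}_{ab}=\tfrac12 C^2(d-2)$, precisely the trace of the claimed answer \eqref{eq_2nd}.

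Next I would extract the remaining components from the subsidiary equations $R^{(1)}_{\rho a}=0$ and $R^{(2)}_{v\rho}=0$, the second-order analogues of \eqref{eq_F_1}--\eqref{eq_h_1}. Differentiating the Ricci components of Appendix~\ref{app_Ricci} once more in $\rho$ and evaluating at $\rho=0$ expresses $\beta^{(3)}_a$, and then $\phi^{(4)}$, algebraically in terms of $\mu^{(2)}_{ab}$, its divergence and trace, and terms quadratic in the first-order data. The crucial simplification is that, by \eqref{eq_1st}, $\mu^{(1)}=C\mathring\mu$ is covariantly constant and $\beta^{(2)}=0$, so every quadratic source is built from $\mathring\mu_{ab}$ and the scalar $\phi^{(3)}$ alone; in particular $\mathring\nabla^b\mu^{(1)}_{ab}=0$, and each such source is either purely scalar or proportional to $\mathring\mu_{ab}$.

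The core step is the tensor equation $R^{(2)}_{ab}=\Lambda\mu^{(2)}_{ab}$. Writing $R^{(2)}_{ab}$ through the variation of the surface Ricci tensor in the form analogous to \eqref{eq_ricci_pert}, namely $\mathcal{R}^{(2)}_{ab}=\mathring\Delta_L\mu^{(2)}_{ab}+\mathring\nabla_{(a}v^{(2)}_{b)}+Q_{ab}$ with $v^{(2)}$ as in \eqref{eq_def_v} and $Q_{ab}$ collecting the quadratic first-order contributions, I would invoke the observations above. Since the trace $\mathring\mu^{cd}\mu^{(2)}_{cd}$ is the constant found from Raychaudhuri, $\mathring\nabla_b(\mathring\mu^{cd}\mu^{(2)}_{cd})=0$; and since $Q_{ab}$ is pure trace, the traceless part of the equation decouples from all sources and collapses, exactly as \eqref{eq_mu_1} collapsed to \eqref{eq_ev_1}, to the homogeneous eigenvalue equation $-\mathring\nabla^2\hat\mu^{(2)}_{ab}=-2\Lambda\tfrac{d-2}{d-3}\hat\mu^{(2)}_{ab}$ for the traceless part $\hat\mu^{(2)}_{ab}:=\mu^{(2)}_{ab}-\tfrac{C^2}{2}\mathring\mu_{ab}$. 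For $\Lambda>0$ the right-hand coefficient is strictly negative while $-\mathring\nabla^2$ has non-negative spectrum on the compact $S$, so $\hat\mu^{(2)}_{ab}=0$. Thus $\mu^{(2)}_{ab}=\tfrac{C^2}{2}\mathring\mu_{ab}$, and back-substitution into the subsidiary equations yields $\beta^{(3)}_a=0$ and $\phi^{(4)}=\tfrac12 C^2\Lambda d(d-2)$, establishing \eqref{eq_2nd}.

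The main obstacle is bookkeeping rather than conceptual. One must compute the relevant second $\rho$-derivatives of the Gaussian-null Ricci components and track the quadratic first-order sources carefully enough to verify that every one of them is pure trace; this is exactly what guarantees that the traceless sector sees the same operator and the same negative eigenvalue as at first order, so that positivity kills it again. The genuinely new element compared with Proposition~\ref{prop_1st} is that the $\rho\rho$-equation, which carried no first-order content, is what fixes the otherwise-undetermined trace of $\mu^{(2)}$---and this same mechanism is what will drive the inductive step to all orders.
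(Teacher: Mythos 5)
Your strategy is the paper's strategy: the trace of $\mu^{(2)}$ from $R^{(0)}_{\rho\rho}=0$ (your Raychaudhuri computation reproduces \eqref{eq_trace_2} exactly), $\beta^{(3)}$ and $\phi^{(4)}$ from $R^{(1)}_{\rho a}=0$ and $R^{(2)}_{v\rho}=0$, and positivity of $-\mathring\nabla^2$ on the traceless part. The conclusion is correct. However, your key structural claim --- that every source in the traceless sector is pure trace, so that $\hat\mu^{(2)}$ obeys \emph{the same} eigenvalue equation as $\hat\mu^{(1)}$ --- is false. The component $R^{(2)}_{ab}$ contains the term $\tfrac{1}{2}(\phi\dot\mu_{ab})^{(3)}\supset\tfrac{3}{2}\phi^{(2)}\mu^{(2)}_{ab}=3\Lambda\mu^{(2)}_{ab}$ (using $\phi^{(2)}=2F^{(0)}=2\Lambda$), which is \emph{linear in the unknown} $\mu^{(2)}_{ab}$ and has a nonzero traceless part; it is not a quadratic first-order source and cannot be absorbed into $\mathring\mu_{ab}$. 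After subtracting the right-hand side $\Lambda\mu^{(2)}_{ab}$ this leaves an extra $+2\Lambda\hat\mu^{(2)}_{ab}$ in the traceless equation, so the correct result is
\begin{equation*}
-\mathring\nabla^2\hat\mu^{(2)}_{ab}=-2\Lambda\frac{3d-8}{d-3}\,\hat\mu^{(2)}_{ab}\;,
\end{equation*}
not $-2\Lambda\tfrac{d-2}{d-3}\hat\mu^{(2)}_{ab}$ as you assert. For $\Lambda>0$ the eigenvalue is still strictly negative, so your conclusion $\hat\mu^{(2)}_{ab}=0$ survives by luck of sign; but the stated equation is wrong, and the order-dependence of this coefficient (it becomes $-(n^2+n+\tfrac{2}{d-3})\Lambda$ at order $n$) is precisely what controls the $\Lambda<0$ analysis later in the paper, so it cannot be waved away.

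A second, smaller gap: you write $\mathcal{R}^{(2)}_{ab}=\mathring\Delta_L\mu^{(2)}_{ab}+\mathring\nabla_{(a}v^{(2)}_{b)}+Q_{ab}$ and assert without argument that the quadratic remainder $Q_{ab}$ is pure trace. The paper justifies the stronger statement $Q_{ab}=0$ by the rescaling trick $\mu=(1+C\rho)\tilde\mu$ with $\varepsilon=\tfrac12\rho^2/(1+C\rho)$: since $\mu^{(1)}=C\mathring\mu$ is a constant conformal factor under which the Ricci tensor is invariant, the second $\rho$-variation of $\mathcal{R}_{ab}$ collapses to the first $\varepsilon$-variation applied to $\mu^{(2)}$. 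Your assertion happens to be consistent with this, but some such argument is needed; a generic second variation of the Ricci tensor does contain traceless terms quadratic in the first-order perturbation.
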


\begin{proof}
Consider  the Einstein equation \eqref{eq_Einstein_nth} for the second order  deformation of the near-horizon data \eqref{eq_NHdata}, assuming the first order data \eqref{eq_1st}, where recall the components of the Ricci tensor are given in Appendix \ref{app_Ricci}.  Firstly, we note that
${R}^{(0)}_{\rho\rho}=0$ reduces to 
\begin{equation}
	\mathring\mu^{ab}\mu^{(2)}_{ab} = \frac{1}{2}C^2(d-2)\; , \label{eq_trace_2}
\end{equation}
where we have used \eqref{eq_Cdef}, which shows that the trace of the second order  deformation of $\mu_{ab}$ is also a constant. Next, we find that the Einstein equations $R^{(2)}_{v\rho}=0$ and $R^{(1)}_{\rho a}=0$ reduce to
\begin{align}
	\phi^{(4)} &= \frac{1}{2}C^2\Lambda d(d-2)-\mathring\nabla^a\beta_a^{(3)}\;,\label{eq_F_2}\\
	\beta_a^{(3)}&=-\mathring\nabla^b\mu^{(2)}_{ab}\; ,\label{eq_h_2}
\end{align}
respectively, where we have used  (\ref{eq_trace_2}) together with the lower order data. Thus analogously to the first order deformations, we deduce that  $\phi^{(4)}$ and 
$\beta^{(3)}$ (that is $F^{(2)}$ and $h^{(2)}$) are determined in terms of $\mu^{(2)}$.
The Einstein equation $R^{(2)}_{ab}=\Lambda\mu^{(2)}_{ab}$ yields
\begin{equation}
	 \mathcal{R}^{(2)}_{ab}+2\Lambda \mu^{(2)}_{ab}-\Lambda C^2\mathring\mu_{ab}+ \mathring\nabla_{(a}\beta_{b)}^{(3)} 
	=0\;.\label{eq_mu_2}
\end{equation}
We may evaluate the second variation of the Ricci tensor $\mathcal{R}^{(2)}_{ab}$ as follows. 

First note that we have so far shown that
\begin{equation}
	\mu_{ab} = \mathring\mu_{ab} + C\rho\mathring\mu_{ab} + \frac{1}{2}\rho^2\mu^{(2)}_{ab} + \ord(\rho^3),
\end{equation}
Now  define a rescaled metric by $\mu=(1+C\rho)\tilde\mu$ so that
\begin{equation}
\tilde \mu _{ab}= \mathring\mu_{ab} + \varepsilon\mu^{(2)}_{ab}+ \ord\left(\varepsilon^{3/2}\right)  \; ,
\end{equation}
where  we have introduced a new
expansion parameter  $\varepsilon:=\frac{1}{2}\frac{\rho^2}{1+C\rho}$.  Thus we can expand the Ricci tensor of $\mu_{ab}$ as 
\begin{equation}
	\mathcal{R}_{ab} = \tilde{\mathcal{R}}_{ab} = \mathring{\mathcal{R}}_{ab} 
	+ \varepsilon \left( \frac{\td}{\td\varepsilon}\tilde{\mathcal{R}}_{ab}\right)_{\varepsilon=0} + \ord\left(\varepsilon^{3/2}\right)\;,
	\label{eq_ricci_2}
\end{equation}
where $\tilde{\mathcal{R}}_{ab}$ is the Ricci tensor of $\tilde\mu_{ab}$.
Thus, recalling that $\mathcal{R}^{(2)}_{\mu\nu}= (\partial^2_\rho \mathcal{R}_{\mu\nu})_{\rho=0}$ and $\varepsilon= \tfrac{1}{2}\rho^2+O(\rho^3)$, we deduce that the second variation of the Ricci tensor is given by the usual first variation of the Ricci tensor applied to the second variation of the metric, that is, 
\begin{equation}
	\mathcal{R}_{ab}^{(2)} = \frac{\td}{\td\varepsilon}\tilde{\mathcal{R}}_{ab}\bigg|_{\varepsilon=0} = 
	\mathring\Delta_L\mu^{(2)}_{ab} + \mathring\nabla_{(a}v^{(2)}_{b)}\;, \label{eq_R_pert}
\end{equation}
where $v^{(2)}_a=\mathring\nabla^b\mu^{(2)}_{ba}- \frac{1}{2} \mathring\nabla_a (\mathring\mu^{cd}\mu^{(2)}_{cd})$.

We now have all the required ingredients. Using (\ref{eq_R_pert}), (\ref{eq_h_2}), (\ref{eq_trace_2}) and \eqref{eq_NHdata}, we find that (\ref{eq_mu_2}) simplifies to 
\begin{equation}
	- \mathring\nabla^2 \hat\mu^{(2)}_{ab}= 
	-2\Lambda\frac{3d-8}{d-3} \hat\mu^{(2)}_{ab}\;, \label{eq_ev_2}
\end{equation}
where $\hat\mu^{(2)}_{ab}:= \mu^{(2)}_{ab}-\frac{C^2}{2}\mathring \mu_{ab}$ is the traceless part of $\mu^{(2)}_{ab}$ (recall \eqref{eq_trace_2}). Therefore, for $\Lambda>0$ we again deduce that the  solution is unique and in this case given by $\hat{\mu}^{(2)}_{ab}=0$.  The rest of the second order data is fixed by \eqref{eq_F_2}, \eqref{eq_h_2} which gives \eqref{eq_2nd}.
\end{proof}

This shows that for the $d\geq 4$ extremal Schwarzschild-dS horizon  the nonzero second order transverse deformations are also uniquely given by the full extremal Schwarzschild-dS solution. 

We will now turn to higher order deformations. Our main result is the following.
\begin{prop} \label{prop_nth}
Consider a spacetime with an extremal horizon as in Proposition \ref{prop_1st} and $\Lambda>0$. The $n$-th order transverse deformations for $n\geq 3$ are given by
\begin{align}
\mu^{(n)}_{ab}=0	\;, \qquad \beta_a^{(n+1)}=0\;, \qquad \phi^{(n+2)}= (-1)^n\frac{2}{d-1}\frac{(d-2+n)!}{(d-3)!}\left(\frac{C}{2}\right)^{n}\Lambda \; .  \label{eq_nth}
\end{align}
\end{prop}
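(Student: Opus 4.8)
The plan is to argue by strong induction on $n$, taking $n=0,1,2$ (the near-horizon data \eqref{eq_NHdata} together with Propositions \ref{prop_1st} and \ref{prop_2nd}) as the base case and assuming \eqref{eq_1st}, \eqref{eq_2nd} and \eqref{eq_nth} for all orders below $n$. At each order I would extract the data in exactly the sequence used at first and second order: the trace $\mathring\mu^{ab}\mu^{(n)}_{ab}$ from $R^{(n-2)}_{\rho\rho}=0$, the one-form $\beta^{(n+1)}_a$ from $R^{(n-1)}_{\rho a}=0$, the traceless part $\hat\mu^{(n)}_{ab}$ from $R^{(n)}_{ab}=\Lambda\mu^{(n)}_{ab}$, and finally $\phi^{(n+2)}$ from $R^{(n)}_{v\rho}=0$. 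The inductive hypothesis makes all of these tractable because every lower-order metric quantity is ``scalar'': $\mu^{(m)}_{ab}\propto\mathring\mu_{ab}$ for $m\le 2$ and $\mu^{(m)}_{ab}=0$ for $3\le m\le n-1$, while $\beta^{(m)}_a=0$ and the $\phi^{(m)},F^{(m)}$ are constants on $S$.

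The engine of the argument is a generalisation of the rescaling trick of Proposition \ref{prop_2nd}. By the inductive hypothesis the cross-section metric expands as
\[
\mu_{ab} = \left(1+\tfrac{C}{2}\rho\right)^2\mathring\mu_{ab} + \frac{\rho^n}{n!}\mu^{(n)}_{ab} + \ord(\rho^{n+1}),
\]
so that $\tilde\mu_{ab}:=(1+\tfrac{C}{2}\rho)^{-2}\mu_{ab}$ satisfies $\tilde\mu_{ab}=\mathring\mu_{ab}+\tfrac{1}{n!}\rho^n\mu^{(n)}_{ab}+\ord(\rho^{n+1})$. Since the conformal factor $(1+\tfrac{C}{2}\rho)^2$ is constant on each cross-section, $\mathcal{R}_{ab}[\mu]=\mathcal{R}_{ab}[\tilde\mu]$; and because $\tilde\mu-\mathring\mu=\ord(\rho^n)$, the quadratic and higher terms in the Ricci expansion of $\tilde\mu$ are $\ord(\rho^{2n})$ and vanish under $\partial_\rho^n|_{\rho=0}$. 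Hence the $n$-th variation reduces to the first variation applied to $\mu^{(n)}$, namely $\mathcal{R}^{(n)}_{ab}=\mathring\Delta_L\mu^{(n)}_{ab}+\mathring\nabla_{(a}v^{(n)}_{b)}$, exactly as in \eqref{eq_R_pert}.

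It then remains to evaluate the genuinely transverse part of $R^{(n)}_{ab}$ from Appendix \ref{app_Ricci}. Because every lower-order quantity is either constant on $S$ or a constant multiple of the covariantly-constant metric $\mathring\mu_{ab}$, all covariant derivatives acting on them vanish, so the transverse terms collapse to (i) a pure-trace source proportional to $\mathring\mu_{ab}$, (ii) an algebraic term $c_n\Lambda\mu^{(n)}_{ab}$, and (iii) the gradient $\mathring\nabla_{(a}\beta^{(n+1)}_{b)}$; this mirrors \eqref{eq_mu_2}, where $c_2=2$. Substituting $\beta^{(n+1)}_a=-\mathring\nabla^b\mu^{(n)}_{ab}$ from the $R_{\rho a}$ equation cancels the $\mathring\nabla_{(a}v^{(n)}_{b)}$ contribution, while $R^{(n-2)}_{\rho\rho}=0$ forces $\mathring\mu^{ab}\mu^{(n)}_{ab}=0$ for $n\ge3$ (the lower-order products that sourced a nonzero trace at second order now vanish). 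The traceless part of $R^{(n)}_{ab}=\Lambda\mu^{(n)}_{ab}$ then becomes $\mathring\Delta_L\hat\mu^{(n)}_{ab}=-c_n\Lambda\hat\mu^{(n)}_{ab}$, i.e.\ using \eqref{eq_def_lichnerowicz} and \eqref{eq_NHdata},
\[
-\mathring\nabla^2\hat\mu^{(n)}_{ab} = -2\Lambda\left(c_n+\frac{d-2}{d-3}\right)\hat\mu^{(n)}_{ab}.
\]
As $-\mathring\nabla^2$ is non-negative on the compact cross-section and the right-hand coefficient is strictly negative for $\Lambda>0$, this forces $\hat\mu^{(n)}_{ab}=0$, hence $\mu^{(n)}_{ab}=0$ and $\beta^{(n+1)}_a=0$. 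Finally $R^{(n)}_{v\rho}=0$ fixes $\phi^{(n+2)}$, which I would check against the Taylor coefficients of $-sf(r_0+\rho)$ read off from \eqref{eq_fr} to recover the stated value.

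The main obstacle is steps (ii)–(iii): controlling the transverse contribution to $R^{(n)}_{ab}$ at arbitrary order, and in particular pinning down the coefficient $c_n$ and showing it is non-negative. This requires taking $n$ $\rho$-derivatives of the nonlinear Ricci expressions, where binomial factors and products of $\phi,\beta,\mu$ proliferate; the inductive hypothesis collapses almost all of these, but the net coefficient multiplying $\Lambda\mu^{(n)}_{ab}$ must still be extracted with care. I expect $c_n$ to be $d$-independent, equal to $c_n=\tfrac12(n-1)(n+2)$, consistent with $c_1=0$, $c_2=2$ and with the $d=4$ scalar eigenvalue $\lambda_n=n^2+n-2$ discussed later; but only $c_n\ge0$ is actually needed, since then the eigenvalue $2\Lambda(c_n+\tfrac{d-2}{d-3})$ is manifestly positive and the unique solution is trivial. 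A secondary, purely bookkeeping obstacle is confirming that the $R_{\rho\rho}$ and $R_{\rho a}$ sources indeed vanish at order $n\ge3$, which again follows from the pure-trace form of the lower-order data.
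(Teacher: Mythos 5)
Your plan coincides with the paper's proof: the same induction, the same ordering of the Einstein equations (trace from $R^{(n-2)}_{\rho\rho}$, then $\beta^{(n+1)}$, then the traceless part of $R^{(n)}_{ab}=\Lambda\mu^{(n)}_{ab}$, then $\phi^{(n+2)}$), the same conformal rescaling $\tilde\mu=(1+\tfrac{C}{2}\rho)^{-2}\mu$ to reduce $\mathcal{R}^{(n)}_{ab}$ to a first variation, and the same final eigenvalue argument. The coefficient you conjecture, $c_n=\tfrac12(n-1)(n+2)=\tfrac{n(n+1)}{2}-1$, is exactly what the paper obtains, and the obstacle you flag is resolved by a one-line Leibniz computation: since $\phi^{(0)}=\phi^{(1)}=0$ and $\mu^{(m)}_{ab}=0$ for $3\le m\le n-1$, the only term in $\tfrac12(\phi\dot\mu_{ab})^{(n+1)}$ proportional to $\mu^{(n)}_{ab}$ is $\tfrac12\binom{n+1}{2}\phi^{(2)}\mu^{(n)}_{ab}=\tfrac{n(n+1)}{2}\Lambda\mu^{(n)}_{ab}$, with all other surviving terms pure trace.
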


\begin{proof}
We prove this by induction.  Thus let $n\geq 3$ and for the induction hypothesis assume that to $(n-1)$-th order 
\begin{align}
	\mu_{ab}^{(k)}	&= \left\{
		\begin{array}{ll}
		\mathring\mu_{ab} \;  & \text{if } k =0 \\
			C\mathring\mu_{ab}\;, &  \text{if } k = 1 \\
			\frac{C^2}{2}\mathring\mu_{ab}\;, & \text{if } k = 2\\
			0\;, & \text{if } 3\leq k\leq n-1 ,
		\end{array}\right.\label{eq_ind_mu}
		\\
  \beta^{(k)} &=0\;,  \qquad \text{if } 0\leq k\leq n \label{eq_ind_h}\\
	\phi^{(k)} &= \left\{
		\begin{array}{ll}
			0\;, & \text{if } 0\leq k \leq 1 \\
			\left[\frac{2\delta_{2k}}{d-1}+(-1)^k\frac{2}{d-1}\frac{(d-4+k)!}{(d-3)!}
			\left(\frac{C}{2}\right)^{k-2}\right]\Lambda\;, & \text{if } 2 \leq k \leq  n+1  \; .
		\end{array}
	\right.\label{eq_ind_F}
\end{align}
The base case $n=3$ is established in Propositions \ref{prop_1st} and \ref{prop_2nd}.

The method is identical to that for the second order calculation.  We implement the Einstein equation \eqref{eq_Einstein_nth} for the $n$-th order data assuming the above induction hypothesis, again using the components of the Ricci tensor in Appendix \ref{app_Ricci}. Firstly, we find $R^{(n-2)}_{\rho\rho}=0$ determines the trace of $n$-th order deformation of $\mu_{ab}$ to be  
\begin{equation}
	\mathring \mu^{ab}\mu^{(n)}_{ab}=0\;.  \label{eq_ntrace}
\end{equation}
Then $R^{(n)}_{v\rho}=0$ and $R^{(n-1)}_{\rho a}=0$ determine $\phi^{(n+2)}$ and $\beta^{(n+1)}$ ($F^{(n)}$ and 
$h^{(n)}$) uniquely in terms of $\mu^{(n)}$  to be 
\begin{align}
	\phi^{(n+2)} &= -\mathring\nabla^a\beta_a^{(n+1)} 
			+\frac{2\Lambda}{d-1}\left(-\frac{C}{2}\right)^n\frac{(d-2+n)!}{(d-3)!} \;,\label{eq_F_n}\\
	\beta_a^{(n+1)}&=-\mathring\nabla^b\mu^{(n)}_{ab}\; ,\label{eq_h_n}
\end{align}
where we have used (\ref{eq_ind_mu}-\ref{eq_ind_F}) for the lower order terms.
Next, $R^{(n)}_{ab}=\Lambda\mu^{(n)}_{ab}$ yields 
\begin{equation}
	\mathring\nabla_{(a}\beta_{b)}^{(n+1)} + \mathcal{R}^{(n)}_{ab} -\Lambda\mu_{ab}^{(n)}
	+ \frac{1}{2}(\phi\dot\mu_{ab})^{(n+1)}
	=-\frac{1}{4}\left[\phi \mu^{cd}(\dot \mu_{cd}\dot \mu_{ab}-2\dot\mu_{ac}\dot\mu_{bd})\right]^{(n)}\;,
	\label{eq_mu_n}
\end{equation}
where $\cdot$ denotes the $\rho$-derivative (without evaluating at $\rho=0$). To evaluate the right-hand 
side of (\ref{eq_mu_n}) it is useful to note that the induction hypothesis (\ref{eq_ind_mu}) is equivalent to
\begin{equation}
	\mu_{ab} =\left(1+\frac{C\rho}{2}\right)^2 \mathring\mu_{ab} + \ord(\rho^n)\;.\label{eq_mu_explicit}
\end{equation}
Indeed, upon substituting (\ref{eq_mu_explicit}) and its inverse into (\ref{eq_mu_n}), the $\rho$-dependent prefactors cancel thus simplifying the 
calculation. Using (\ref{eq_ind_mu}), we 
can evaluate
\begin{equation}
	\frac{1}{2}(\phi\dot\mu_{ab})^{(n+1)} = 
	\left[\frac{C}{2}\phi^{(n+1)}+ \frac{n+1}{4}C^2\phi^{(n)}\right]\mathring\mu_{ab} 
	+ \frac{n(n+1)}{2}\Lambda\mu^{(n)}_{ab}\;. \label{eq_Fmu_derivative}
\end{equation}
Substituting (\ref{eq_Fmu_derivative}) into (\ref{eq_mu_n}), the lower order terms (right-hand side of (\ref{eq_mu_n}) 
and first term of (\ref{eq_Fmu_derivative})) cancel by the induction hypothesis (\ref{eq_ind_mu}-\ref{eq_ind_F}).  
Thus, using (\ref{eq_h_n}), (\ref{eq_mu_n}) 
reduces to
\begin{equation} \label{eq_nRab}
	 \mathcal{R}^{(n)}_{ab}-\mathring\nabla_{(a}\mathring\nabla^c\mu^{(n)}_{b)c}  = 
	-\left(\frac{n(n+1)}{2}-1\right)\Lambda \mu^{(n)}_{ab} \;.
\end{equation}
Using the same method for evaluating $\mathcal{R}^{(n)}_{ab}$ as for the second variation (see \eqref{eq_R_pert}) we find
\begin{equation}
	\mathcal{R}_{ab}^{(n)}  = 
	\mathring\Delta_L\mu^{(n)}_{ab} + \mathring\nabla_{(a}v^{(n)}_{b)}\;, \label{eq_R_npert}
\end{equation}
where $v^{(n)}_a=\mathring\nabla^b\mu^{(n)}_{ba}- \frac{1}{2} \mathring\nabla_a (\mathring\mu^{cd}\mu^{(n)}_{cd})$. 

Finally, using \eqref{eq_ntrace}, \eqref{eq_R_npert}, \eqref{eq_NHdata}, we  find that \eqref{eq_nRab} reduces to a simple eigenvalue equation 
\begin{equation}
	-\mathring\nabla^2\mu^{(n)}_{ab} = -\left(n^2 + n + \frac{2}{d-3}\right)\Lambda\mu^{(n)}_{ab}\;. \label{eq_ev_n}
\end{equation}
For $\Lambda>0$  the only solution is therefore $\mu^{(n)}_{ab}=0$.  Then substituting back into \eqref{eq_h_n} and \eqref{eq_F_n} we find that the rest of the $n$-th order data is
\begin{align}
\beta_a^{(n+1)}=0, \qquad 	\phi^{(n+2)} =  \frac{2\Lambda}{d-1}\left(-\frac{C}{2}\right)^n\frac{(d-2+n)!}{(d-3)!} \; .
\end{align}
Therefore, the claim follows by induction.\end{proof}

We can now deduce our main result which is a more detailed statement of Theorem \ref{thm_dS}.
\begin{theorem} \label{thm_main}
Consider an analytic spacetime that satisfies the Einstein equation \eqref{eq_Einstein} with $\Lambda>0$,   containing an extremal Killing horizon with a compact cross-section $S$ and near-horizon data \eqref{eq_NHdata}. Then the metric is given by \eqref{eq_metric_Gaussian} where 
\begin{align}
	\phi &=  \left\{
		\begin{array}{ll}
			-\frac{4}{C^2}\frac{\Lambda}{d-3}\left[1-\frac{2}{d-1}\left(1+\frac{C\rho}{2}\right)^{-d+3}-\frac{d-3}{d-1}\left(1+\frac{C\rho}{2}\right)^2\right] & \text{ for } C\neq0\;,\\
			\Lambda\rho^2&\text{ for } C=0\;,
		\end{array}\right.\nonumber\\
	\beta&=0\;,\label{eq_metric_C} \\
	\mu &=  r_0^2\left(1 +\frac{C\rho}{2}\right)^2d\Omega_{d-2}^2\; ,\nonumber
\end{align}
$C$ is a constant, and $r_0$ is defined in \eqref{eq_mmax}.  If $C=0$ this is the near-horizon geometry $dS_2\times S^{d-2}$ (Nariai solution). If $C\neq0$, this is the $d\geq 4$ extremal Schwarzschild de Sitter spacetime.
\end{theorem}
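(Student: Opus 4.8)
The plan is to combine the three preceding propositions with analyticity. Since $(M,g)$ is analytic and the Gaussian null coordinate system of \eqref{eq_metric_Gaussian} is canonically attached to the horizon (up to the choice of $S$ and coordinates on $S$, which we have already fixed through \eqref{eq_Cdef}), the metric functions $\phi$, $\beta_a$, $\mu_{ab}$ are analytic in $\rho$ near $\rho=0$. Hence each is recovered from its Taylor series $\sum_{n\ge 0}\frac{\rho^n}{n!}X^{(n)}$, and knowing all transverse deformations $X^{(n)}$ determines the exact metric. Propositions \ref{prop_1st}, \ref{prop_2nd} and \ref{prop_nth} supply precisely these coefficients in terms of the single free constant $C$, namely \eqref{eq_1st}, \eqref{eq_2nd} and \eqref{eq_nth} together with the seed data \eqref{eq_NHdata}. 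So the task reduces to resumming three explicit power series in $\rho$ and recognising the result.

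For the cross-section metric, the only nonzero coefficients are $\mu^{(0)}_{ab}=\mathring\mu_{ab}$, $\mu^{(1)}_{ab}=C\mathring\mu_{ab}$ and $\mu^{(2)}_{ab}=\tfrac{1}{2}C^2\mathring\mu_{ab}$, which sum to $\mu_{ab}=(1+\tfrac{C\rho}{2})^2\mathring\mu_{ab}$; with $\mathring\mu=r_0^2\,d\Omega_{d-2}^2$ this is the claimed $\mu$. For $\beta_a$, every coefficient vanishes (by \eqref{eq_1st}, \eqref{eq_2nd}, \eqref{eq_nth} and $\mathring h_a=0$), so $\beta\equiv 0$. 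These two are immediate.

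The substantive step is resumming $\phi=\sum_{k\ge 2}\frac{\rho^k}{k!}\phi^{(k)}$ with $\phi^{(k)}$ given by the unified formula \eqref{eq_ind_F}. After substituting $x:=\tfrac{C\rho}{2}$ and using $\tfrac{(d-4+k)!}{(d-3)!\,k!}=\tfrac{1}{d-3}\binom{d-4+k}{k}$, the main part of the series becomes $\tfrac{8\Lambda}{(d-1)C^2}$ times $\tfrac{1}{d-3}\sum_{k\ge 2}\binom{d-4+k}{k}(-x)^k$, which I would evaluate using the generating function $\sum_{k\ge 0}\binom{d-4+k}{k}y^k=(1-y)^{-(d-3)}$ at $y=-x$, subtracting the $k=0,1$ terms. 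Adding back the isolated $\delta_{2k}$ contribution $\tfrac{\Lambda\rho^2}{d-1}$ and simplifying, this should collapse to the closed form in \eqref{eq_metric_C}. I expect this binomial resummation to be the only genuine obstacle, and the final check that it reproduces \eqref{eq_metric_C} amounts to expanding $(1+x)^2$ in the claimed expression and matching term by term.

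Finally I would identify the solution. For $C=0$ only $\phi^{(2)}=2\Lambda$ survives, giving $\phi=\Lambda\rho^2$ and $\mu=r_0^2\,d\Omega^2_{d-2}$, i.e. $dS_2\times S^{d-2}$ (Nariai). For $C\neq 0$, I would compare \eqref{eq_metric_C} with the Eddington--Finkelstein form \eqref{eq_metric_EF} of the extremal Schwarzschild-dS solution: the change of variables $r=r_0(1+\tfrac{C\rho}{2})$ matches $\mu$, and after the accompanying linear rescaling of $v$ that preserves the cross term $2\,dv\,d\rho$ one finds $\phi=-\tfrac{4}{C^2 r_0^2}f(r)$ with $\Lambda=\tfrac{d-3}{r_0^2}$, confirming local isometry to \eqref{eq_metric_general} with $s=1$. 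This completes the identification.
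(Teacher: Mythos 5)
Your proposal is correct and follows essentially the same route as the paper: invoke analyticity to reduce the theorem to the Taylor coefficients supplied by Propositions \ref{prop_1st}, \ref{prop_2nd} and \ref{prop_nth}, resum the series for $\phi$, $\beta$, $\mu$ (your binomial generating-function computation checks out and matches \eqref{eq_metric_C}), and identify the $C=0$ and $C\neq 0$ cases with the Nariai and extremal Schwarzschild--de Sitter solutions via the rescaling $\rho'=\tfrac{r_0 C}{2}\rho$, $v'=\tfrac{2}{r_0 C}v$, i.e.\ $r=r_0(1+\tfrac{C\rho}{2})$. The only difference is that you spell out the resummation that the paper leaves implicit.
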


\begin{proof}
The assumption of analyticity means that metric components can be expressed as  Taylor series $\phi=\sum_{n\geq 0}\frac{ \phi^{(n)}}{n!} \rho^n$ and similarly for $\beta_a, \mu_{ab}$. The coefficients $\phi^{(n)}$ etc. are given by Propositions \ref{prop_1st}, \ref{prop_2nd} and \ref{prop_nth} and the resulting series can be summed to obtain \eqref{eq_metric_C}.  For $C=0$ one obtains $\td S_2\times S^{d-2}$ where the 2d de Sitter space $\td S_2$ is written in coordinates adapted to an extremal horizon. For $C\neq0$, one can 
rescale $\rho$ and $v$ by
\begin{align}
	v' = \frac{2}{ r_0 C}v\;, \qquad \rho' = \frac{ r_0 C}{2}\rho\;,\label{rescale}
\end{align}
so that the solution explicitly becomes the extremal Schwarzschild-dS solution \eqref{eq_metric_EF}, \eqref{eq_fr}.
\end{proof}

Finally, we consider higher order deformations in the $d=4$, $\Lambda<0$ case. In Proposition \ref{prop_AdS_1} we have seen  that there is a nontrivial space of first order transverse deformations of the horizon of extremal hyperbolic Schwarzschild-AdS, which are determined by the cohomology of the corresponding hyperbolic surface. It is an interesting question what solutions to the Einstein equation, if any, these cohomological first order deformations correspond to, but we will not pursue this here. 

Let us suppose that the first order deformation is that of the extremal Schwarzschild-AdS solution, that is, restrict to the $C_i=0$ deformations in Proposition  \ref{prop_AdS_1}. For such solutions, the proofs of Propositions \ref{prop_2nd} and \ref{prop_nth} show that non-trivial deformations starting at order $n\geq 2$ exist, if and only if nontrivial eigentensors of $-\mathring{\nabla}^2$ on $S$ with eigenvalue  $(n^2+n+2)| \Lambda |$ exist.  We can again expand the deformation $\mu_{ab}^{(n)}$ in the basis (\ref{eq_2tensors})  to deduce that nontrivial solutions starting  at  $n$-th order  exist, if and only if there exist eigenvalues of the scalar laplacian of the form 
\begin{equation}
\lambda_n = (n^2+n-2)|\Lambda |  \; ,\label{eq_eigenvalues_n}
\end{equation}
for $n\geq 2$. Therefore, for $d=4, \Lambda<0$ one is not guaranteed uniqueness at any order.    Nevertheless, we may state the following conditional result.

\begin{prop}\label{prop_AdS}
Consider a spacetime with an extremal horizon as in Proposition \ref{prop_AdS_1} and assume that the first order deformation is given by that of the extremal hyperbolic Schwarzschild-AdS solution (so $C_i=0$). Then Theorem \ref{thm_main} is  valid if the spectrum of the scalar laplacian on the hyperbolic surface $(S, \mathring \mu)$, where $\text{Ric}(\mathring\mu)=-|\Lambda | \mathring\mu$, does not contain any eigenvalue of the form \eqref{eq_eigenvalues_n} for  integer $n\geq 2$. In this case, the solution is either the near-horizon geometry $\text{AdS}_2\times H^2$ ($C=0$) or the extremal hyperbolic Schwarzschild-AdS spacetime ($C\neq 0$). 
\end{prop}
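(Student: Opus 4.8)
The plan is to re-run the inductive argument of Propositions \ref{prop_2nd} and \ref{prop_nth} essentially unchanged, exploiting the fact that the derivation of the order-$n$ eigenvalue equation \eqref{eq_ev_n} is a purely algebraic and geometric consequence of the Einstein equations in Gaussian null coordinates: it never uses the sign of $\Lambda$ or the dimension, and only the very last step, concluding $\mu^{(n)}_{ab}=0$, invoked positivity of $-\mathring\nabla^2$ and hence $\Lambda>0$. I would therefore keep the same induction hypothesis \eqref{eq_ind_mu}--\eqref{eq_ind_F}, now specialised to $d=4$ and $\Lambda=-|\Lambda|$, with the base case supplied by Proposition \ref{prop_AdS_1} restricted to $C_i=0$ (which gives $\hat\mu^{(1)}_{ab}=0$). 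At each order $n\ge2$ the equations $R^{(n-2)}_{\rho\rho}=0$, $R^{(n)}_{v\rho}=0$ and $R^{(n-1)}_{\rho a}=0$ fix the trace of $\mu^{(n)}_{ab}$ and determine $\phi^{(n+2)}$ and $\beta^{(n+1)}_a$ algebraically, while $R^{(n)}_{ab}=\Lambda\mu^{(n)}_{ab}$ reduces for $d=4$ to the eigenvalue equation
\[
-\mathring\nabla^2\hat\mu^{(n)}_{ab}=(n^2+n+2)|\Lambda|\,\hat\mu^{(n)}_{ab}
\]
for the traceless part $\hat\mu^{(n)}_{ab}$.

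The key step is then to decide whether this equation admits a nontrivial traceless solution, and for this I would invoke the decomposition theorem of Appendix \ref{app_2tensor}. It expresses every traceless symmetric $(0,2)$ tensor on $(S,\mathring\mu)$ in the eigentensor basis $S^{\{\lambda+4\}}, P^{\{\lambda+4\}}$ (eigenvalue $(\lambda+4)|\Lambda|$, built from scalar eigenfunctions of eigenvalue $\lambda|\Lambda|$), $V^{\{4\}}$ (eigenvalue $4|\Lambda|$, built from harmonic 1-forms) and $Y^{\{2\}}$ (eigenvalue $2|\Lambda|$, the divergence-free traceless part). For $n\ge2$ the required eigenvalue satisfies $(n^2+n+2)|\Lambda|\ge8|\Lambda|$, which strictly exceeds both $4|\Lambda|$ and $2|\Lambda|$, so the $V^{\{4\}}$ and $Y^{\{2\}}$ sectors are inaccessible; the only way $\hat\mu^{(n)}_{ab}$ can fail to vanish is through the $S,P$ sector, which requires a scalar Laplace eigenvalue $\lambda_n=(n^2+n-2)|\Lambda|$, exactly the forbidden value \eqref{eq_eigenvalues_n}.

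Under the stated spectral hypothesis I would then conclude $\hat\mu^{(n)}_{ab}=0$ at every order $n\ge2$, so that $\mu^{(n)}_{ab}$ is pure trace and $\phi^{(n+2)}, \beta^{(n+1)}_a$ take the same closed form as in the $\Lambda>0$ case; this closes the induction. Finally, using analyticity I would sum the Taylor series $\phi=\sum_{n}\phi^{(n)}\rho^n/n!$ (and likewise for $\beta_a$ and $\mu_{ab}$) exactly as in the proof of Theorem \ref{thm_main}, now with $\Lambda<0$ and cross-section a compact quotient of $H^2$ of genus $g\ge2$, obtaining $\mu=\mathring\mu(1+\frac{C\rho}{2})^2$ and recovering $\text{AdS}_2\times H^2$ when $C=0$ and the extremal hyperbolic Schwarzschild-AdS solution \eqref{eq_metric_general} (with $s=-1$) when $C\ne0$.

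The one genuinely new feature relative to the de Sitter argument is that the order-$n$ eigenvalue is now positive, so uniqueness is no longer automatic and must be extracted from the spectrum of $-\mathring\nabla^2$. The main obstacle is therefore not any computation but the spectral hypothesis itself, which cannot be dispensed with in general: the spectrum of the scalar Laplacian on a compact hyperbolic surface is not known in closed form and, at special points of moduli space, is known to contain eigenvalues of the forbidden type \eqref{eq_eigenvalues_n}. A small technical point I would verify explicitly is that the harmonic-1-form sector $V^{\{4\}}$ (the source of the first-order non-uniqueness) and the divergence-free traceless sector $Y^{\{2\}}$ are genuinely out of reach for all $n\ge2$, which follows immediately from the eigenvalue comparison since $n^2+n+2=4$ forces $n=1$ and $n^2+n+2=2$ forces $n=0$.
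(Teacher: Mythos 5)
Your proposal is correct and follows essentially the same route as the paper: re-running the induction of Propositions \ref{prop_2nd} and \ref{prop_nth} for $d=4$, $\Lambda<0$ up to the eigenvalue equation \eqref{eq_ev_n}, then expanding $\hat\mu^{(n)}_{ab}$ in the basis \eqref{eq_2tensors} of Appendix \ref{app_2tensor} to see that only the scalar-derived sector can realise the eigenvalue $(n^2+n+2)|\Lambda|$, which forces the spectral condition \eqref{eq_eigenvalues_n}. The only difference is that you spell out the eigenvalue comparison excluding the $V^{\{4\}}$ and $Y^{\{2\}}$ sectors explicitly, which the paper leaves implicit.
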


Interestingly, the spectrum of the laplacian on compact hyperbolic surfaces is an open problem and we are not 
aware of any analytic results that would in general rule out the eigenvalues \eqref{eq_eigenvalues_n}. For certain symmetric hyperbolic surfaces, including the Bolza surface and the Klein quartic, the first couple hundred eigenvalues have been computed numerically~\cite{strohmaier_algorithm_2011}, and none of them are of the form (\ref{eq_eigenvalues_n}) (in fact, none of them are integers in units of $|\Lambda |$). It is well-known that compact hyperbolic surfaces of genus $g$ have a $(6g-6)$-dimensional moduli space. It is natural to expect that, for a fixed genus, the spectrum is (generically) a  continuous function of the moduli. In fact, it has been shown that non-degenerate eigenvalues are analytic functions of the moduli~\cite{buser_geometry_2010}. Interestingly, numerical results suggest that for special points in the moduli space of hyperbolic surfaces, eigenvalues of the form (\ref{eq_eigenvalues_n}) can be realised~\cite{bonifacio_bootstrap_2021}. However, at generic points in the moduli space we expect that this is not the case. Therefore, it  seems reasonable to conjecture that, for generic compact hyperbolic surfaces, such eigenvalues do not occur and hence uniqueness of higher order deformations holds (assuming the first order deformation is as stated in Proposition \ref{prop_AdS}). On the other hand, if one drops the assumption that $S$ is compact, then eigenvalues of the scalar laplacian on hyperbolic space of the form \eqref{eq_eigenvalues_n} do always exist and therefore in this case uniqueness may be violated at all orders.
  \\

\noindent{\bf Acknowledgements.} DK is supported by an EPSRC studentship. JL is supported by a Leverhulme Trust Research Project Grant RPG-2019-355.  We thank James Bonifacio for useful comments on the spectrum of hyperbolic surfaces and pointing out the reference~\cite{bonifacio_bootstrap_2021}.

\section*{Statements and declarations}

\noindent {\bf Competing interests.}  The authors have no relevant financial or non-financial interests to disclose.
\\

\noindent{\bf Data availability.} Data sharing is not applicable to this article as no datasets were generated or analysed during the current study.

\appendix

\section{Ricci tensor in Gaussian null coordinates}\label{app_Ricci} 

The components of the Ricci tensor for a general metric in Gaussian null coordinates can be found in~\cite{moncrief_symmetries_1983} and 
also\footnote{We noticed that a $\frac{1}{2}\nabla^a\dot h_a$ term is missing in the $\rho v$ component 
in~\cite{hollands_stationary_2009}.} in~\cite{hollands_stationary_2009}.   The metric reads \eqref{eq_metric_Gaussian} where recall we also we assume $\partial_v$ is a Killing vector field.  Then, the components of the Ricci tensor of $g$ which we need are
\begin{align}
	R_{\rho\rho} &= -\frac{1}{2}\mu^{ab}\ddot \mu_{ab} + \frac{1}{4}\mu^{ac}\mu^{bd}\dot \mu_{ab}\dot \mu_{cd} \;,\\
	R_{\rho v} &= \frac{1}{2\sqrt{\det\mu}}\dotr{\left[\sqrt{\det \mu}\left(\dot{\phi}- 
	\beta^a\dot{\beta}_a\right)\right]}	+\frac{1}{2}\nabla^a\dot{\beta}_a\;,\\
	R_{\rho a} &= \frac{1}{2\sqrt{\det \mu}}\dotr{\left[\sqrt{\operatorname{det} \mu} \left(\dot{\beta}_a-\beta^b\dot \mu_{ab}\right)\right]}
	+\frac{1}{2}\nabla^b\dot \mu_{ab}-\frac{1}{2}\nabla_a(\mu^{bc}\dot\mu_{bc})\;, \\
	R_{ab} &=  \frac{1}{2\sqrt{\det\mu}}\dotr{\left[\sqrt{\det\mu}\left(2\nabla_{(a}\beta_{b)}
	+\phi\dot\mu_{ab}-\beta^c\beta_c\dot\mu_{ab}\right)\right]}+\frac{1}{2}\nabla_c\left(\beta^c\dot\mu_{ab}\right)
	\nonumber\\
	&\qquad+\mathcal{R}_{ab}-\frac{1}{2}\left[\dot\beta_a-\beta^c\dot\mu_{ac}\right]
	\left[\dot\beta_b-\beta^c\dot\mu_{bc}\right] -\dot\mu_{c(a}\nabla^c\beta_{b)} 
	+ \frac{1}{2}(\beta^c\beta_c-\phi)\dot\mu_{ac}\dot\mu_{bd}\mu^{cd}\; ,
\end{align}
where $\cdot$ denotes a $\rho$-derivative, $\nabla_a$ and $\mathcal{R}_{ab}$ are the metric connection and Ricci tensor of $\mu_{ab}$ on the codimension-2 surfaces $S_{v,\rho}$ of constant $(v,\rho)$. 

\section{Symmetric traceless \texorpdfstring{$2$}{2}-tensors on hyperbolic surfaces}\label{app_2tensor}

In this section we prove the following decomposition for traceless symmetric  tensor fields on a compact hyperbolic surface.

\begin{prop}
The space of traceless symmetric  $(0,2)$ tensor fields on a compact hyperbolic surface  $(S, \mathring\mu)$, with constant scalar curvature $R=2\Lambda<0$, is spanned by  eigentensors  of $-\mathring\nabla^2$,
\begin{align}
	S^{\{\lambda+4\}}_{ab} &:= \mathring{\nabla}_a\mathring{\nabla}_b f^{\{\lambda\}} - \frac{1}{2}\mathring \mu_{ab}\mathring\nabla^2f^{\{\lambda\}}\;, \qquad
	P^{\{\lambda+4\}}_{ab} := \mathring{\nabla}_{(a}\mathring\epsilon_{b)c}\mathring\nabla^cf^{\{\lambda\}}\;, \nonumber\\
	V^{\{4\}, i}_{ab} &:= \mathring{\nabla}_{(a}\xi^i_{b)}\;, \qquad\qquad\qquad\qquad\qquad Y^{\{2\}}_{ab}\;,\label{eq_2tensors_app}
\end{align}
where $\mathring\epsilon$ is the volume form, $f^{\{\lambda\}}$ are eigenfunctions of $-\mathring\nabla^2$ with 
eigenvalues $-\lambda \Lambda > 0$, $\{\xi^i\}_{i=1}^{2g}$ is a basis of harmonic 1-forms with respect to the Hodge--de Rham laplacian, and $Y^{\{2\}}$ is a  traceless symmetric divergence-free $(0,2)$ tensor.  In particular,  $S^{\{\lambda+4\}}, P^{\{\lambda+4\}}, V^{\{4\}}, Y^{\{2\}}$ are eigentensors of $-\mathring\nabla^2$ with eigenvalues $-(\lambda+4)\Lambda $, $-(\lambda+4)\Lambda $, $-4 \Lambda $ and $-2 \Lambda $, respectively.
\end{prop}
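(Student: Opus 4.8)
The plan is to prove the two assertions separately: first, that each of the tensors in \eqref{eq_2tensors_app} is an eigentensor of $-\mathring\nabla^2$ with the stated eigenvalue, and second, that these four families together span the full space of traceless symmetric $(0,2)$ tensors. For the spanning statement I would use a York-type (transverse--traceless) decomposition combined with the Hodge decomposition of $1$-forms; for the eigenvalue statements I would use Bochner/Weitzenb\"ock identities specialised to the constant-curvature surface, where $\mathcal{\mathring R}_{abcd}=\Lambda(\mathring\mu_{ac}\mathring\mu_{bd}-\mathring\mu_{ad}\mathring\mu_{bc})$ and $\mathcal{\mathring R}_{ab}=\Lambda\mathring\mu_{ab}$.

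For spanning, introduce the conformal Killing operator $L$ acting on a vector field $W$ by $(LW)_{ab}=2\mathring\nabla_{(a}W_{b)}-\mathring\mu_{ab}\mathring\nabla_cW^c$, which takes values in the traceless symmetric tensors, together with its formal adjoint $L^\ast$, proportional to the divergence. The composition $L^\ast L$ is elliptic, self-adjoint and non-negative, with kernel equal to the space of conformal Killing vectors. On a compact hyperbolic surface of genus $g\ge2$ there are no nonzero conformal Killing vectors (equivalently, no nonzero holomorphic vector fields), so $L^\ast L$ has trivial kernel and is therefore invertible by Fredholm theory. Given any traceless symmetric $h_{ab}$, solving $L^\ast L\,W=L^\ast h$ and setting $Y:=h-LW$ yields $L^\ast Y=0$, i.e.\ a divergence-free traceless tensor; this is the $Y^{\{2\}}$ piece. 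It then remains to decompose $LW$: Hodge theory writes $W=\mathring\nabla f+\star\mathring\nabla g+\xi$ with $f,g$ functions and $\xi$ harmonic, and a direct computation gives $L(\mathring\nabla f)=2S$, $L(\star\mathring\nabla g)=2P$ and $L(\xi)=2V$, using that $\star\mathring\nabla g$ and $\xi$ are divergence-free so that the trace term in $L$ drops out. Finally, expanding $f$ and $g$ in the complete orthonormal basis of eigenfunctions of $-\mathring\nabla^2$ (the spectral theorem on the compact surface, with constants contributing nothing since $S,P$ annihilate them) produces exactly the $S^{\{\lambda+4\}}$ and $P^{\{\lambda+4\}}$ families indexed by the scalar eigenvalues $-\lambda\Lambda>0$, completing the spanning argument.

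For the eigenvalues, the cleanest route exploits the parallel complex structure $J_a{}^b:=\mathring\epsilon_a{}^b$: since $\mathring\nabla\mathring\epsilon=0$, the map $Y_{ab}\mapsto \mathring\epsilon_{(a}{}^c Y_{b)c}$ preserves traceless symmetric tensors and commutes with $-\mathring\nabla^2$, and one checks directly that $P^{\{\lambda+4\}}=J\,S^{\{\lambda+4\}}$, so $P$ automatically shares the eigenvalue of $S$. The eigenvalue of $S^{\{\lambda+4\}}$ then follows by commuting $-\mathring\nabla^2$ through the two covariant derivatives in $\mathring\nabla_a\mathring\nabla_b f^{\{\lambda\}}$: the Ricci identity gives $-\mathring\nabla^2\mathring\nabla_a f=(\nu-\Lambda)\mathring\nabla_a f$ for an eigenfunction with $-\mathring\nabla^2 f=\nu f$, and one further commutation for the second derivative produces the total shift by $-4\Lambda$. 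For $V^{\{4\},i}$ one uses that a harmonic $1$-form satisfies $-\mathring\nabla^2\xi=-\Lambda\xi$ (from $\Delta=-\mathring\nabla^2+\Lambda$) and is divergence-free, and then evaluates $-\mathring\nabla^2\mathring\nabla_{(a}\xi_{b)}$ by the same commutation, giving the further shift by $-3\Lambda$ and hence eigenvalue $-4\Lambda$. For the transverse-traceless tensors $Y^{\{2\}}$ one uses that, in isothermal coordinates, the conditions $\mathring\nabla^aY_{ab}=0$ and $Y^a{}_a=0$ identify $Y$ with (the real part of) a holomorphic quadratic differential, so that the non-curvature part of the rough Laplacian annihilates it and the Weitzenb\"ock identity collapses to the pure curvature term $-\mathring\nabla^2 Y_{ab}=-2\Lambda Y_{ab}$.

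I expect the main obstacle to be the spanning step, specifically justifying the invertibility of $L^\ast L$: this is where the genus $g\ge2$ (hyperbolic) hypothesis enters essentially, through the vanishing of conformal Killing vectors, and one must invoke elliptic Fredholm theory to pass from trivial kernel to surjectivity. The second delicate point is the $Y^{\{2\}}$ eigenvalue: unlike $S,P,V$ it is not generated from a scalar or a harmonic $1$-form, so its eigen-property is not inherited and must be extracted directly, relying crucially on both the transverse and the traceless conditions to cancel the non-curvature contributions. The remaining Bochner computations are routine given the constant-curvature form of the Riemann tensor.
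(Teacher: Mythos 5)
Your proposal is correct and follows essentially the same route as the paper: the York-type decomposition you set up via $L^\ast L$ is, after integration by parts, exactly the paper's equation $\left(-\mathring\nabla^2-\Lambda\right)X_a=-2\mathring\nabla^bT_{ab}$, whose unique solvability the paper obtains from ellipticity and trivial kernel of that operator (equivalently your observation that a genus $g\ge 2$ surface carries no conformal Killing fields), followed by the same Hodge decomposition of the resulting vector field and expansion of the potentials in scalar eigenfunctions. The eigenvalue computations you spell out, including $P=JS$ and the identification of the transverse--traceless tensors with holomorphic quadratic differentials, are precisely the ``explicit calculation using maximal symmetry'' that the paper leaves to the reader, and they check out.
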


\begin{proof}
 For any traceless symmetric  tensor $T_{ab}$ we claim that there exists a 1-form $X_a$ such that
\begin{equation}
	Y_{ab}:= T_{ab}-\mathring\nabla_{(a}X_{b)}+\frac{1}{2}\mathring\mu_{ab}\mathring\nabla^c X_c \label{eq_Ydef}
\end{equation}
is divergence-free (note that it is also trace-free by construction). Taking the divergence of (\ref{eq_Ydef}) and requiring that $\mathring\nabla^b Y_{ab}=0$ we obtain that such $X$ must satisfy
\begin{equation}
	\left(-\mathring\nabla^2 - \Lambda  \right)X_a = -2\mathring\nabla^bT_{ab}\;.
\end{equation}
This equation always has a unique solution for $X$ and compact $S$, since the operator on the left-hand side is elliptic and self-adjoint and has a trivial kernel (see e.g. Theorem 5.22 in \cite{voisin_hodge_2002}).  This establishes the claim of existence of the 1-form $X$ above.

Next,  by the Hodge-decomposition theorem $X=\td f+\star\td g+\xi$, where $f, g$ are functions on $S$, $\star$ is the Hodge operator on $S$, and $\xi$ is a harmonic 1-form with respect to the Hodge--de Rham laplacian $(\td + \star\td\star)^2=-\mathring\nabla^2+\Lambda$ (for 1-forms on $S$). It follows that we can always decompose a traceless symmetric tensor $T_{ab}$ in terms of functions $f, g$, a harmonic 1-form $\xi$, and a divergence-free traceless symmetric tensor $Y_{ab}$, as 
\begin{equation}
T_{ab}=  \mathring{\nabla}_a\mathring{\nabla}_b f - \frac{1}{2}\mathring \mu_{ab}\mathring\nabla^2 f  +  \mathring{\nabla}_{(a}\mathring\epsilon_{b)c}\mathring\nabla^c g   +\nabla_{(a} \xi_{b)} + Y_{ab}  \; .
\end{equation}
The functions $f, g$ can be each expanded in a basis of eigenfunctions $f^{ \{\lambda \} }$ of $-\mathring\nabla^2$. The space of harmonic 1-forms on a hyperbolic surface of genus $g\geq 2$  corresponds to the first cohomology which is $2g$-dimensional. Thus, using the above decomposition, we deduce that  \eqref{eq_2tensors_app} span the space of traceless symmetric tensor fields.  

Finally, the claim that \eqref{eq_2tensors_app} are all eigentensors of $-\mathring\nabla^2$ with the stated eigenvalues follows by explicit calculation using the fact that $(S, \mathring\mu)$ is maximally symmetric.
\end{proof}

Interestingly, the above decomposition for traceless symmetric tensors on hyperbolic surfaces is more complicated than for the sphere $S^2$.   As is well known, on $S^2$ the scalar-derived tensor harmonics (i.e. $S^{\{\lambda+4\}}, P^{\{\lambda+4\}}$)  span the space of traceless symmetric tensor fields.  This can be shown using similar arguments as above, noting that there are no harmonic 1-forms or divergence-free traceless symmetric tensors on $S^2$.

\bibliographystyle{sn-mathphys_modified}
\bibliography{ref}


\begin{thebibliography}{51}
\ifx \bisbn   \undefined \def \bisbn  #1{ISBN #1}\fi
\ifx \binits  \undefined \def \binits#1{#1}\fi
\ifx \bauthor  \undefined \def \bauthor#1{#1}\fi
\ifx \batitle  \undefined \def \batitle#1{#1}\fi
\ifx \bjtitle  \undefined \def \bjtitle#1{#1}\fi
\ifx \bvolume  \undefined \def \bvolume#1{\textbf{#1}}\fi
\ifx \byear  \undefined \def \byear#1{#1}\fi
\ifx \bissue  \undefined \def \bissue#1{#1}\fi
\ifx \bfpage  \undefined \def \bfpage#1{#1}\fi
\ifx \blpage  \undefined \def \blpage #1{#1}\fi
\ifx \burl  \undefined \def \burl#1{\textsf{#1}}\fi
\ifx \doiurl  \undefined \def \doiurl#1{\url{https://doi.org/#1}}\fi
\ifx \betal  \undefined \def \betal{\textit{et al.}}\fi
\ifx \binstitute  \undefined \def \binstitute#1{#1}\fi
\ifx \binstitutionaled  \undefined \def \binstitutionaled#1{#1}\fi
\ifx \bctitle  \undefined \def \bctitle#1{#1}\fi
\ifx \beditor  \undefined \def \beditor#1{#1}\fi
\ifx \bpublisher  \undefined \def \bpublisher#1{#1}\fi
\ifx \bbtitle  \undefined \def \bbtitle#1{#1}\fi
\ifx \bedition  \undefined \def \bedition#1{#1}\fi
\ifx \bseriesno  \undefined \def \bseriesno#1{#1}\fi
\ifx \blocation  \undefined \def \blocation#1{#1}\fi
\ifx \bsertitle  \undefined \def \bsertitle#1{#1}\fi
\ifx \bsnm \undefined \def \bsnm#1{#1}\fi
\ifx \bsuffix \undefined \def \bsuffix#1{#1}\fi
\ifx \bparticle \undefined \def \bparticle#1{#1}\fi
\ifx \barticle \undefined \def \barticle#1{#1}\fi
\ifx \bconfdate \undefined \def \bconfdate #1{#1}\fi
\ifx \botherref \undefined \def \botherref #1{#1}\fi
\ifx \url \undefined \def \url#1{\textsf{#1}}\fi
\ifx \bchapter \undefined \def \bchapter#1{#1}\fi
\ifx \bbook \undefined \def \bbook#1{#1}\fi
\ifx \bcomment \undefined \def \bcomment#1{#1}\fi
\ifx \oauthor \undefined \def \oauthor#1{#1}\fi
\ifx \citeauthoryear \undefined \def \citeauthoryear#1{#1}\fi
\ifx \endbibitem  \undefined \def \endbibitem {}\fi
\ifx \bconflocation  \undefined \def \bconflocation#1{#1}\fi
\ifx \arxivurl  \undefined \def \arxivurl#1{\textsf{#1}}\fi
\csname PreBibitemsHook\endcsname

\bibitem{chrusciel_stationary_2012}
\begin{barticle}
\bauthor{\bsnm{Chrusciel}, \binits{P.T.}},
\bauthor{\bsnm{Lopes~Costa}, \binits{J.}},
\bauthor{\bsnm{Heusler}, \binits{M.}}:
\batitle{{Stationary Black Holes: Uniqueness and Beyond}}.
\bjtitle{Living Rev. Rel.}
\bvolume{15},
\bfpage{7}
(\byear{2012})
{\href{https://arxiv.org/abs/1205.6112}{{arXiv:1205.6112}}}
{[gr-qc]}.
\doiurl{10.12942/lrr-2012-7}
\end{barticle}
\endbibitem

\bibitem{amsel_uniqueness_2010}
\begin{barticle}
\bauthor{\bsnm{Amsel}, \binits{A.J.}},
\bauthor{\bsnm{Horowitz}, \binits{G.T.}},
\bauthor{\bsnm{Marolf}, \binits{D.}},
\bauthor{\bsnm{Roberts}, \binits{M.M.}}:
\batitle{{Uniqueness of Extremal Kerr and Kerr-Newman Black Holes}}.
\bjtitle{Phys. Rev. D}
\bvolume{81},
\bfpage{024033}
(\byear{2010})
{\href{https://arxiv.org/abs/0906.2367}{{arXiv:0906.2367}}}
{[gr-qc]}.
\doiurl{10.1103/PhysRevD.81.024033}
\end{barticle}
\endbibitem

\bibitem{figueras_uniqueness_2009}
\begin{barticle}
\bauthor{\bsnm{Figueras}, \binits{P.}},
\bauthor{\bsnm{Lucietti}, \binits{J.}}:
\batitle{{On the uniqueness of extremal vacuum black holes}}.
\bjtitle{Class. Quant. Grav.}
\bvolume{27},
\bfpage{095001}
(\byear{2010})
{\href{https://arxiv.org/abs/0906.5565}{{arXiv:0906.5565}}}
{[hep-th]}.
\doiurl{10.1088/0264-9381/27/9/095001}
\end{barticle}
\endbibitem

\bibitem{kleinwachter_analytical_2008}
\begin{botherref}
Analytical treatment of limiting cases.
In: \oauthor{\bsnm{Kleinwächter}, \binits{A.}},
\oauthor{\bsnm{Petroff}, \binits{D.}},
\oauthor{\bsnm{Neugebauer}, \binits{G.}},
\oauthor{\bsnm{Ansorg}, \binits{M.}},
\oauthor{\bsnm{Meinel}, \binits{R.}} (eds.)
Relativistic {Figures} of {Equilibrium},
pp. 34--113.
Cambridge University Press,
Cambridge
(2008).
\doiurl{10.1017/CBO9780511535154.004}
\end{botherref}
\endbibitem

\bibitem{chrusciel_uniqueness_2010}
\begin{barticle}
\bauthor{\bsnm{Chrusciel}, \binits{P.T.}},
\bauthor{\bsnm{Nguyen}, \binits{L.}}:
\batitle{{A uniqueness theorem for degenerate Kerr-Newman black holes}}.
\bjtitle{Annales Henri Poincare}
\bvolume{11},
\bfpage{585}--\blpage{609}
(\byear{2010})
{\href{https://arxiv.org/abs/1002.1737}{{arXiv:1002.1737}}}
{[gr-qc]}.
\doiurl{10.1007/s00023-010-0038-3}
\end{barticle}
\endbibitem

\bibitem{chrusciel_nonexistence_2005}
\begin{barticle}
\bauthor{\bsnm{Chrusciel}, \binits{P.T.}},
\bauthor{\bsnm{Reall}, \binits{H.S.}},
\bauthor{\bsnm{Tod}, \binits{P.}}:
\batitle{{On non-existence of static vacuum black holes with degenerate components of the event horizon}}.
\bjtitle{Class. Quant. Grav.}
\bvolume{23},
\bfpage{549}--\blpage{554}
(\byear{2006})
{\href{https://arxiv.org/abs/gr-qc/0512041}{{arXiv:gr-qc/0512041}}}.
\doiurl{10.1088/0264-9381/23/2/018}
\end{barticle}
\endbibitem

\bibitem{chrusciel_israel-wilson-perjes_2006}
\begin{barticle}
\bauthor{\bsnm{Chrusciel}, \binits{P.T.}},
\bauthor{\bsnm{Reall}, \binits{H.S.}},
\bauthor{\bsnm{Tod}, \binits{P.}}:
\batitle{{On Israel-Wilson-Perjes black holes}}.
\bjtitle{Class. Quant. Grav.}
\bvolume{23},
\bfpage{2519}--\blpage{2540}
(\byear{2006})
{\href{https://arxiv.org/abs/gr-qc/0512116}{{arXiv:gr-qc/0512116}}}.
\doiurl{10.1088/0264-9381/23/7/018}
\end{barticle}
\endbibitem

\bibitem{emparan_rotating_2002}
\begin{barticle}
\bauthor{\bsnm{Emparan}, \binits{R.}},
\bauthor{\bsnm{Reall}, \binits{H.S.}}:
\batitle{{A Rotating black ring solution in five-dimensions}}.
\bjtitle{Phys. Rev. Lett.}
\bvolume{88},
\bfpage{101101}
(\byear{2002})
{\href{https://arxiv.org/abs/hep-th/0110260}{{arXiv:hep-th/0110260}}}.
\doiurl{10.1103/PhysRevLett.88.101101}
\end{barticle}
\endbibitem

\bibitem{emparan_black_2008}
\begin{barticle}
\bauthor{\bsnm{Emparan}, \binits{R.}},
\bauthor{\bsnm{Reall}, \binits{H.S.}}:
\batitle{{Black Holes in Higher Dimensions}}.
\bjtitle{Living Rev. Rel.}
\bvolume{11},
\bfpage{6}
(\byear{2008})
{\href{https://arxiv.org/abs/0801.3471}{{arXiv:0801.3471}}}
{[hep-th]}.
\doiurl{10.12942/lrr-2008-6}
\end{barticle}
\endbibitem

\bibitem{hollands_black_2012}
\begin{barticle}
\bauthor{\bsnm{Hollands}, \binits{S.}},
\bauthor{\bsnm{Ishibashi}, \binits{A.}}:
\batitle{{Black hole uniqueness theorems in higher dimensional spacetimes}}.
\bjtitle{Class. Quant. Grav.}
\bvolume{29},
\bfpage{163001}
(\byear{2012})
{\href{https://arxiv.org/abs/1206.1164}{{arXiv:1206.1164}}}
{[gr-qc]}.
\doiurl{10.1088/0264-9381/29/16/163001}
\end{barticle}
\endbibitem

\bibitem{gibbons_uniqueness_2002}
\begin{barticle}
\bauthor{\bsnm{Gibbons}, \binits{G.W.}},
\bauthor{\bsnm{Ida}, \binits{D.}},
\bauthor{\bsnm{Shiromizu}, \binits{T.}}:
\batitle{{Uniqueness and nonuniqueness of static vacuum black holes in higher dimensions}}.
\bjtitle{Prog. Theor. Phys. Suppl.}
\bvolume{148},
\bfpage{284}--\blpage{290}
(\byear{2003})
{\href{https://arxiv.org/abs/gr-qc/0203004}{{arXiv:gr-qc/0203004}}}.
\doiurl{10.1143/PTPS.148.284}
\end{barticle}
\endbibitem

\bibitem{gibbons_dilatonic_2002}
\begin{barticle}
\bauthor{\bsnm{Gibbons}, \binits{G.W.}},
\bauthor{\bsnm{Ida}, \binits{D.}},
\bauthor{\bsnm{Shiromizu}, \binits{T.}}:
\batitle{{Uniqueness of (dilatonic) charged black holes and black p-branes in higher dimensions}}.
\bjtitle{Phys. Rev. D}
\bvolume{66},
\bfpage{044010}
(\byear{2002})
{\href{https://arxiv.org/abs/hep-th/0206136}{{arXiv:hep-th/0206136}}}.
\doiurl{10.1103/PhysRevD.66.044010}
\end{barticle}
\endbibitem

\bibitem{kunduri_static_2017}
\begin{barticle}
\bauthor{\bsnm{Kunduri}, \binits{H.K.}},
\bauthor{\bsnm{Lucietti}, \binits{J.}}:
\batitle{{No static bubbling spacetimes in higher dimensional Einstein\textendash{}Maxwell theory}}.
\bjtitle{Class. Quant. Grav.}
\bvolume{35}(\bissue{5}),
\bfpage{054003}
(\byear{2018})
{\href{https://arxiv.org/abs/1712.02668}{{arXiv:1712.02668}}}
{[gr-qc]}.
\doiurl{10.1088/1361-6382/aaa744}
\end{barticle}
\endbibitem

\bibitem{lucietti_higher_2020}
\begin{barticle}
\bauthor{\bsnm{Lucietti}, \binits{J.}}:
\batitle{{All Higher-Dimensional Majumdar\textendash{}Papapetrou Black Holes}}.
\bjtitle{Annales Henri Poincare}
\bvolume{22}(\bissue{7}),
\bfpage{2437}--\blpage{2450}
(\byear{2021})
{\href{https://arxiv.org/abs/2009.05358}{{arXiv:2009.05358}}}
{[gr-qc]}.
\doiurl{10.1007/s00023-021-01037-0}
\end{barticle}
\endbibitem

\bibitem{breunholder_moduli_2017}
\begin{barticle}
\bauthor{\bsnm{Breunh\"older}, \binits{V.}},
\bauthor{\bsnm{Lucietti}, \binits{J.}}:
\batitle{{Moduli space of supersymmetric solitons and black holes in five dimensions}}.
\bjtitle{Commun. Math. Phys.}
\bvolume{365}(\bissue{2}),
\bfpage{471}--\blpage{513}
(\byear{2019})
{\href{https://arxiv.org/abs/1712.07092}{{arXiv:1712.07092}}}
{[hep-th]}.
\doiurl{10.1007/s00220-018-3215-8}
\end{barticle}
\endbibitem

\bibitem{katona_supersymmetric_2022}
\begin{barticle}
\bauthor{\bsnm{Katona}, \binits{D.}},
\bauthor{\bsnm{Lucietti}, \binits{J.}}:
\batitle{{Supersymmetric Black Holes with a Single Axial Symmetry in Five Dimensions}}.
\bjtitle{Commun. Math. Phys.}
\bvolume{399}(\bissue{2}),
\bfpage{1151}--\blpage{1201}
(\byear{2023})
{\href{https://arxiv.org/abs/2206.11782}{{arXiv:2206.11782}}}
{[hep-th]}.
\doiurl{10.1007/s00220-022-04576-7}
\end{barticle}
\endbibitem

\bibitem{borghini_uniqueness_2023}
\begin{barticle}
\bauthor{\bsnm{Borghini}, \binits{S.}},
\bauthor{\bsnm{Chru\'sciel}, \binits{P.T.}},
\bauthor{\bsnm{Mazzieri}, \binits{L.}}:
\batitle{{On the Uniqueness of Schwarzschild\textendash{}de Sitter Spacetime}}.
\bjtitle{Arch. Ration. Mech. Anal.}
\bvolume{247}(\bissue{2}),
\bfpage{1}--\blpage{35}
(\byear{2023})
{\href{https://arxiv.org/abs/1909.05941}{{arXiv:1909.05941}}}
{[math.DG]}.
\doiurl{10.1007/s00205-023-01860-1}
\end{barticle}
\endbibitem

\bibitem{dias_static_2023}
\begin{botherref}
\oauthor{\bsnm{Dias}, \binits{O.J.C.}},
\oauthor{\bsnm{Gibbons}, \binits{G.W.}},
\oauthor{\bsnm{Santos}, \binits{J.E.}},
\oauthor{\bsnm{Way}, \binits{B.}}:
{Static Black Binaries in de Sitter}
(2023)
{\href{https://arxiv.org/abs/2303.07361}{{arXiv:2303.07361}}}
{[gr-qc]}
\end{botherref}
\endbibitem

\bibitem{chrusciel_towards_2001}
\begin{barticle}
\bauthor{\bsnm{Chrusciel}, \binits{P.T.}},
\bauthor{\bsnm{Simon}, \binits{W.}}:
\batitle{{Towards the classification of static vacuum space-times with negative cosmological constant}}.
\bjtitle{J. Math. Phys.}
\bvolume{42},
\bfpage{1779}--\blpage{1817}
(\byear{2001})
{\href{https://arxiv.org/abs/gr-qc/0004032}{{arXiv:gr-qc/0004032}}}.
\doiurl{10.1063/1.1340869}
\end{barticle}
\endbibitem

\bibitem{lee_penrose_2015}
\begin{botherref}
\oauthor{\bsnm{Lee}, \binits{D.}},
\oauthor{\bsnm{Neves}, \binits{A.}}:
The penrose inequality for asymptotically locally hyperbolic spaces with nonpositive mass.
Communications in Mathematical Physics
\textbf{339}
(2015)
{\href{https://arxiv.org/abs/1310.3002}{{arXiv:1310.3002}}}
{[math.DG]}.
\doiurl{10.1007/s00220-015-2421-x}
\end{botherref}
\endbibitem

\bibitem{chrusciel_nonsingular_2005}
\begin{barticle}
\bauthor{\bsnm{Chrusciel}, \binits{P.T.}},
\bauthor{\bsnm{Delay}, \binits{E.}}:
\batitle{{Non-singular, vacuum, stationary space-times with a negative cosmological constant}}.
\bjtitle{Annales Henri Poincare}
\bvolume{8},
\bfpage{219}--\blpage{239}
(\byear{2007})
{\href{https://arxiv.org/abs/gr-qc/0512110}{{arXiv:gr-qc/0512110}}}.
\doiurl{10.1007/s00023-006-0306-4}
\end{barticle}
\endbibitem

\bibitem{chrusciel_nondegeneracy_2017}
\begin{barticle}
\bauthor{\bsnm{Chru\'sciel}, \binits{P.T.}},
\bauthor{\bsnm{Delay}, \binits{E.}},
\bauthor{\bsnm{Klinger}, \binits{P.}}:
\batitle{{On non-degeneracy of Riemannian Schwarzschild-anti de Sitter metrics}}.
\bjtitle{Adv. Theor. Math. Phys.}
\bvolume{23}(\bissue{5}),
\bfpage{1215}--\blpage{1269}
(\byear{2019})
{\href{https://arxiv.org/abs/1710.07597}{{arXiv:1710.07597}}}
{[gr-qc]}.
\doiurl{10.4310/ATMP.2019.v23.n5.a2}
\end{barticle}
\endbibitem

\bibitem{kunduri_classification_2013}
\begin{barticle}
\bauthor{\bsnm{Kunduri}, \binits{H.K.}},
\bauthor{\bsnm{Lucietti}, \binits{J.}}:
\batitle{{Classification of near-horizon geometries of extremal black holes}}.
\bjtitle{Living Rev. Rel.}
\bvolume{16},
\bfpage{8}
(\byear{2013})
{\href{https://arxiv.org/abs/1306.2517}{{arXiv:1306.2517}}}
{[hep-th]}.
\doiurl{10.12942/lrr-2013-8}
\end{barticle}
\endbibitem

\bibitem{li_transverse_2016}
\begin{barticle}
\bauthor{\bsnm{Li}, \binits{C.}},
\bauthor{\bsnm{Lucietti}, \binits{J.}}:
\batitle{{Transverse deformations of extreme horizons}}.
\bjtitle{Class. Quant. Grav.}
\bvolume{33}(\bissue{7}),
\bfpage{075015}
(\byear{2016})
{\href{https://arxiv.org/abs/1509.03469}{{arXiv:1509.03469}}}
{[gr-qc]}.
\doiurl{10.1088/0264-9381/33/7/075015}
\end{barticle}
\endbibitem

\bibitem{li_electrovacuum_2019}
\begin{barticle}
\bauthor{\bsnm{Li}, \binits{C.}},
\bauthor{\bsnm{Lucietti}, \binits{J.}}:
\batitle{{Electrovacuum spacetime near an extreme horizon}}.
\bjtitle{Adv. Theor. Math. Phys.}
\bvolume{23}(\bissue{7}),
\bfpage{1903}--\blpage{1950}
(\byear{2019})
{\href{https://arxiv.org/abs/1809.08164}{{arXiv:1809.08164}}}
{[gr-qc]}.
\doiurl{10.4310/ATMP.2019.v23.n7.a5}
\end{barticle}
\endbibitem

\bibitem{kolanowski_towards_2021}
\begin{barticle}
\bauthor{\bsnm{Kolanowski}, \binits{M.}}:
\batitle{{Towards the black hole uniqueness: transverse deformations of the extremal Reissner-Nordstr\"om-(A)dS horizon}}.
\bjtitle{JHEP}
\bvolume{01},
\bfpage{042}
(\byear{2022})
{\href{https://arxiv.org/abs/2111.00806}{{arXiv:2111.00806}}}
{[gr-qc]}.
\doiurl{10.1007/JHEP01(2022)042}
\end{barticle}
\endbibitem

\bibitem{dunajski_einstein_2016}
\begin{barticle}
\bauthor{\bsnm{Dunajski}, \binits{M.}},
\bauthor{\bsnm{Gutowski}, \binits{J.}},
\bauthor{\bsnm{Sabra}, \binits{W.}}:
\batitle{{Einstein\textendash{}Weyl spaces and near-horizon geometry}}.
\bjtitle{Class. Quant. Grav.}
\bvolume{34}(\bissue{4}),
\bfpage{045009}
(\byear{2017})
{\href{https://arxiv.org/abs/1610.08953}{{arXiv:1610.08953}}}
{[hep-th]}.
\doiurl{10.1088/1361-6382/aa5992}
\end{barticle}
\endbibitem

\bibitem{fontanella_moduli_2016}
\begin{barticle}
\bauthor{\bsnm{Fontanella}, \binits{A.}},
\bauthor{\bsnm{Gutowski}, \binits{J.B.}}:
\batitle{{Moduli Spaces of Transverse Deformations of Near-Horizon Geometries}}.
\bjtitle{J. Phys. A}
\bvolume{50}(\bissue{21}),
\bfpage{215202}
(\byear{2017})
{\href{https://arxiv.org/abs/1610.09949}{{arXiv:1610.09949}}}
{[hep-th]}.
\doiurl{10.1088/1751-8121/aa6cbf}
\end{barticle}
\endbibitem

\bibitem{moncrief_symmetries_1983}
\begin{barticle}
\bauthor{\bsnm{Moncrief}, \binits{V.}},
\bauthor{\bsnm{Isenberg}, \binits{J.}}:
\batitle{{Symmetries of cosmological Cauchy horizons}}.
\bjtitle{Commun. Math. Phys.}
\bvolume{89}(\bissue{3}),
\bfpage{387}--\blpage{413}
(\byear{1983}).
\doiurl{10.1007/BF01214662}
\end{barticle}
\endbibitem

\bibitem{li_three_2014}
\begin{barticle}
\bauthor{\bsnm{Li}, \binits{C.}},
\bauthor{\bsnm{Lucietti}, \binits{J.}}:
\batitle{{Three-dimensional black holes and descendants}}.
\bjtitle{Phys. Lett. B}
\bvolume{738},
\bfpage{48}--\blpage{54}
(\byear{2014})
{\href{https://arxiv.org/abs/1312.2626}{{arXiv:1312.2626}}}
{[hep-th]}.
\doiurl{10.1016/j.physletb.2014.09.012}
\end{barticle}
\endbibitem

\bibitem{lucietti_uniqueness_2021}
\begin{barticle}
\bauthor{\bsnm{Lucietti}, \binits{J.}},
\bauthor{\bsnm{Ovchinnikov}, \binits{S.G.}}:
\batitle{{Uniqueness of supersymmetric AdS5 black holes with SU(2) symmetry}}.
\bjtitle{Class. Quant. Grav.}
\bvolume{38}(\bissue{19}),
\bfpage{195019}
(\byear{2021})
{\href{https://arxiv.org/abs/2105.08542}{{arXiv:2105.08542}}}
{[hep-th]}.
\doiurl{10.1088/1361-6382/ac13b7}
\end{barticle}
\endbibitem

\bibitem{lucietti_uniqueness_2022}
\begin{barticle}
\bauthor{\bsnm{Lucietti}, \binits{J.}},
\bauthor{\bsnm{Ntokos}, \binits{P.}},
\bauthor{\bsnm{Ovchinnikov}, \binits{S.G.}}:
\batitle{{On the uniqueness of supersymmetric AdS(5) black holes with toric symmetry}}.
\bjtitle{Class. Quant. Grav.}
\bvolume{39}(\bissue{24}),
\bfpage{245006}
(\byear{2022})
{\href{https://arxiv.org/abs/2208.00896}{{arXiv:2208.00896}}}
{[hep-th]}.
\doiurl{10.1088/1361-6382/aca193}
\end{barticle}
\endbibitem

\bibitem{welch_smoothness_1995}
\begin{barticle}
\bauthor{\bsnm{Welch}, \binits{D.L.}}:
\batitle{{On the smoothness of the horizons of multi - black hole solutions}}.
\bjtitle{Phys. Rev. D}
\bvolume{52},
\bfpage{985}--\blpage{991}
(\byear{1995})
{\href{https://arxiv.org/abs/hep-th/9502146}{{arXiv:hep-th/9502146}}}.
\doiurl{10.1103/PhysRevD.52.985}
\end{barticle}
\endbibitem

\bibitem{candlish_smoothness_2007}
\begin{barticle}
\bauthor{\bsnm{Candlish}, \binits{G.N.}},
\bauthor{\bsnm{Reall}, \binits{H.S.}}:
\batitle{{On the smoothness of static multi-black hole solutions of higher-dimensional Einstein-Maxwell theory}}.
\bjtitle{Class. Quant. Grav.}
\bvolume{24},
\bfpage{6025}--\blpage{6040}
(\byear{2007})
{\href{https://arxiv.org/abs/0707.4420}{{arXiv:0707.4420}}}
{[gr-qc]}.
\doiurl{10.1088/0264-9381/24/23/022}
\end{barticle}
\endbibitem

\bibitem{horowitz_almost_2022}
\begin{barticle}
\bauthor{\bsnm{Horowitz}, \binits{G.T.}},
\bauthor{\bsnm{Kolanowski}, \binits{M.}},
\bauthor{\bsnm{Santos}, \binits{J.E.}}:
\batitle{{Almost all extremal black holes in AdS are singular}}.
\bjtitle{JHEP}
\bvolume{01},
\bfpage{162}
(\byear{2023})
{\href{https://arxiv.org/abs/2210.02473}{{arXiv:2210.02473}}}
{[hep-th]}.
\doiurl{10.1007/JHEP01(2023)162}
\end{barticle}
\endbibitem

\bibitem{kastor_cosmological_1993}
\begin{barticle}
\bauthor{\bsnm{Kastor}, \binits{D.}},
\bauthor{\bsnm{Traschen}, \binits{J.H.}}:
\batitle{{Cosmological multi - black hole solutions}}.
\bjtitle{Phys. Rev. D}
\bvolume{47},
\bfpage{5370}--\blpage{5375}
(\byear{1993})
{\href{https://arxiv.org/abs/hep-th/9212035}{{arXiv:hep-th/9212035}}}.
\doiurl{10.1103/PhysRevD.47.5370}
\end{barticle}
\endbibitem

\bibitem{bonifacio_bootstrap_2021}
\begin{barticle}
\bauthor{\bsnm{Bonifacio}, \binits{J.}}:
\batitle{{Bootstrap bounds on closed hyperbolic manifolds}}.
\bjtitle{JHEP}
\bvolume{02},
\bfpage{025}
(\byear{2022})
{\href{https://arxiv.org/abs/2107.09674}{{arXiv:2107.09674}}}
{[hep-th]}.
\doiurl{10.1007/JHEP02(2022)025}
\end{barticle}
\endbibitem

\bibitem{cardoso_nariai_2004}
\begin{barticle}
\bauthor{\bsnm{Cardoso}, \binits{V.}},
\bauthor{\bsnm{Dias}, \binits{O.J.C.}},
\bauthor{\bsnm{Lemos}, \binits{J.P.S.}}:
\batitle{{Nariai, Bertotti-Robinson and anti-Nariai solutions in higher dimensions}}.
\bjtitle{Phys. Rev. D}
\bvolume{70},
\bfpage{024002}
(\byear{2004})
{\href{https://arxiv.org/abs/hep-th/0401192}{{arXiv:hep-th/0401192}}}.
\doiurl{10.1103/PhysRevD.70.024002}
\end{barticle}
\endbibitem

\bibitem{lake_effects_1977}
\begin{barticle}
\bauthor{\bsnm{Lake}, \binits{K.}},
\bauthor{\bsnm{Roeder}, \binits{R.C.}}:
\batitle{{Effects of a Nonvanishing Cosmological Constant on the Spherically Symmetric Vacuum Manifold}}.
\bjtitle{Phys. Rev. D}
\bvolume{15},
\bfpage{3513}--\blpage{3519}
(\byear{1977}).
\doiurl{10.1103/PhysRevD.15.3513}
\end{barticle}
\endbibitem

\bibitem{podolsky_structure_1999}
\begin{barticle}
\bauthor{\bsnm{Podolsky}, \binits{J.}}:
\batitle{{The Structure of the extreme Schwarzschild-de Sitter space-time}}.
\bjtitle{Gen. Rel. Grav.}
\bvolume{31},
\bfpage{1703}--\blpage{1725}
(\byear{1999})
{\href{https://arxiv.org/abs/gr-qc/9910029}{{arXiv:gr-qc/9910029}}}.
\doiurl{10.1023/A:1026762116655}
\end{barticle}
\endbibitem

\bibitem{mann_topological_1997}
\begin{barticle}
\bauthor{\bsnm{Mann}, \binits{R.B.}}:
\batitle{{Topological black holes: Outside looking in}}.
\bjtitle{Annals Israel Phys. Soc.}
\bvolume{13},
\bfpage{311}
(\byear{1997})
{\href{https://arxiv.org/abs/gr-qc/9709039}{{arXiv:gr-qc/9709039}}}
\end{barticle}
\endbibitem

\bibitem{ginsparg_semiclassical_1983}
\begin{barticle}
\bauthor{\bsnm{Ginsparg}, \binits{P.}},
\bauthor{\bsnm{Perry}, \binits{M.J.}}:
\batitle{Semiclassical perdurance of de sitter space}.
\bjtitle{Nuclear Physics B}
\bvolume{222}(\bissue{2}),
\bfpage{245}--\blpage{268}
(\byear{1983}).
\doiurl{10.1016/0550-3213(83)90636-3}
\end{barticle}
\endbibitem

\bibitem{nariai_new_1999}
\begin{barticle}
\bauthor{\bsnm{Nariai}, \binits{H.}}:
\batitle{On a {New} {Cosmological} {Solution} of {Einstein}'s {Field} {Equations} of {Gravitation}}.
\bjtitle{General Relativity and Gravitation}
\bvolume{31}(\bissue{6}),
\bfpage{963}--\blpage{971}
(\byear{1999}).
\doiurl{10.1023/A:1026602724948}.
\bcomment{H. Nariai, On a new cosmological solution of Einstein’s field equations of gravitation, Sci. Rep. Tohoku Univ. 35, 62 (1951)}
\end{barticle}
\endbibitem

\bibitem{nariai_static_1999}
\begin{barticle}
\bauthor{\bsnm{Nariai}, \binits{H.}}:
\batitle{On {Some} {Static} {Solutions} of {Einstein}'s {Gravitational} {Field} {Equations} in a {Spherically} {Symmetric} {Case}}.
\bjtitle{General Relativity and Gravitation}
\bvolume{31}(\bissue{6}),
\bfpage{951}--\blpage{961}
(\byear{1999}).
\doiurl{10.1023/A:1026698508110}.
\bcomment{H. Nariai, On some static solutions of Einstein’s gravitational field equations in a spherically symmetric case, Sci. Rep. Tohoku Univ. 34, 160 (1950)}
\end{barticle}
\endbibitem

\bibitem{caldarelli_extremal_2000}
\begin{bchapter}
\bauthor{\bsnm{Caldarelli}, \binits{M.}},
\bauthor{\bsnm{Vanzo}, \binits{L.}},
\bauthor{\bsnm{Zerbini}, \binits{S.}}:
\bctitle{{The Extremal limit of D-dimensional black holes}}.
In: \bbtitle{{Conference on Geometrical Aspects of Quantum Fields}},
pp. \bfpage{56}--\blpage{63}
(\byear{2000})
{\href{https://arxiv.org/abs/hep-th/0008136}{{arXiv:hep-th/0008136}}}.
\doiurl{10.1142/9789812810366_0005}
\end{bchapter}
\endbibitem

\bibitem{hollands_stationary_2009}
\begin{barticle}
\bauthor{\bsnm{Hollands}, \binits{S.}},
\bauthor{\bsnm{Ishibashi}, \binits{A.}}:
\batitle{{On the `Stationary Implies Axisymmetric' Theorem for Extremal Black Holes in Higher Dimensions}}.
\bjtitle{Commun. Math. Phys.}
\bvolume{291},
\bfpage{403}--\blpage{441}
(\byear{2009})
{\href{https://arxiv.org/abs/0809.2659}{{arXiv:0809.2659}}}
{[gr-qc]}.
\doiurl{10.1007/s00220-009-0841-1}
\end{barticle}
\endbibitem

\bibitem{bahuaud_static_2022}
\begin{barticle}
\bauthor{\bsnm{Bahuaud}, \binits{E.}},
\bauthor{\bsnm{Gunasekaran}, \binits{S.}},
\bauthor{\bsnm{Kunduri}, \binits{H.K.}},
\bauthor{\bsnm{Woolgar}, \binits{E.}}:
\batitle{{Static near-horizon geometries and rigidity of quasi-Einstein manifolds}}.
\bjtitle{Lett. Math. Phys.}
\bvolume{112}(\bissue{6}),
\bfpage{116}
(\byear{2022})
{\href{https://arxiv.org/abs/2208.02962}{{arXiv:2208.02962}}}
{[math.DG]}.
\doiurl{10.1007/s11005-022-01610-6}
\end{barticle}
\endbibitem

\bibitem{wylie_rigidity_2023}
\begin{barticle}
\bauthor{\bsnm{Wylie}, \binits{W.}}:
\batitle{{Rigidity of compact static near-horizon geometries with negative cosmological constant}}.
\bjtitle{Lett. Math. Phys.}
\bvolume{113}(\bissue{2}),
\bfpage{29}
(\byear{2023}).
\doiurl{10.1007/s11005-023-01654-2}
\end{barticle}
\endbibitem

\bibitem{strohmaier_algorithm_2011}
\begin{botherref}
\oauthor{\bsnm{Strohmaier}, \binits{A.}},
\oauthor{\bsnm{Uski}, \binits{V.}}:
An algorithm for the computation of eigenvalues, spectral zeta functions and zeta-determinants on hyperbolic surfaces.
Communications in Mathematical Physics
\textbf{317}
(2011)
{\href{https://arxiv.org/abs/1110.2150}{{arXiv:1110.2150}}}
{[math.SP]}.
\doiurl{10.1007/s00220-012-1557-1}
\end{botherref}
\endbibitem

\bibitem{buser_geometry_2010}
\begin{bbook}
\bauthor{\bsnm{Buser}, \binits{P.}}:
\bbtitle{Geometry and Spectra of Compact Riemann Surfaces}.
\bsertitle{Modern Birkh\"auser Classics}.
\bpublisher{Birkh\"auser},
\blocation{Boston}
(\byear{2010}).
\doiurl{10.1007/978-0-8176-4992-0}
\end{bbook}
\endbibitem

\bibitem{voisin_hodge_2002}
\begin{bbook}
\bauthor{\bsnm{Voisin}, \binits{C.}}:
\bbtitle{Hodge Theory and Complex Algebraic Geometry I}.
\bsertitle{Cambridge Studies in Advanced Mathematics},
vol. \bseriesno{1}.
\bpublisher{Cambridge University Press},
\blocation{Cambridge}
(\byear{2002}).
\doiurl{10.1017/CBO9780511615344}
\end{bbook}
\endbibitem

\end{thebibliography}
\end{document}